\newtheorem*{theorem*}{Theorem}
\newtheorem{lemma}{Lemma}
\renewenvironment{abstract}
  {{\bfseries\noindent{\abstractname}\par\nobreak}\footnotesize}
  {\bigskip}
\titlespacing{\section}{0pt}{*3}{*1}
\titlespacing{\subsection}{0pt}{*2}{*0.5}
\titlespacing{\subsubsection}{0pt}{*1.5}{0pt}
\providecommand\citet{\cite}
\providecommand\citep{\cite}
\newif\iflatexml\latexmlfalse
\newcommand{\RNum}[1]{\textup{\uppercase\expandafter{\romannumeral#1}}}
\begin{document}

\title{Multi-Agent Reinforcement Learning for Network Routing in Integrated Access Backhaul Networks}

\author[1]{Shahaf Yamin, Haim H. Permuter}%
\affil[1]{The authors are with the School of Electrical and Computer Engineering, Ben-Gurion University of the Negev, Be'er-Sheva, Israel (e-mail: yamins@post.bgu.ac.il; haimp@bgu.ac.il).}%

\vspace{-1em}

  \date{\today}

\begingroup
\let\center\flushleft
\let\endcenter\endflushleft
\maketitle
\endgroup

\selectlanguage{english}
\begin{abstract}
In this study, we examine the problem of wireless routing in integrated access backhaul (IAB) networks involving fiber-connected base stations, wireless base stations, and multiple users. Physical constraints prevent the use of a central controller, leaving base stations with limited access to real-time network conditions. These networks operate in a time-slotted regime, where base stations monitor network conditions and forward packets accordingly.
Our objective is to maximize the arrival ratio of packets, while simultaneously minimizing their latency. To accomplish this, we formulate this problem as a multi-agent partially observed Markov Decision Process (POMDP). Moreover, we develop an algorithm that uses Multi-Agent Reinforcement Learning (MARL) combined with Advantage Actor Critic (A2C) to derive a joint routing policy on a distributed basis. Due to the importance of packet destinations for successful routing decisions, we utilize information about similar destinations as a basis for selecting specific-destination routing decisions. For portraying the similarity between those destinations, we rely on their relational base-station associations, i.e., which base station they are currently connected to. Therefore, the algorithm is referred to as Relational Advantage Actor Critic (Relational A2C). To the best of our knowledge, this is the first work that optimizes routing strategy for IAB networks. Further, we present three types of training paradigms for this algorithm in order to provide flexibility in terms of its performance and throughput. Through numerical experiments with different network scenarios, Relational A2C algorithms were demonstrated to be capable of achieving near-centralized performance even though they operate in a decentralized manner in the network of interest. Based on the results of those experiments, we compare Relational A2C to other reinforcement learning algorithms, like Q-Routing and Hybrid Routing. This comparison illustrates that solving the joint optimization problem increases network efficiency and reduces selfish agent behavior.\end{abstract}%

\sloppy
\section{Introduction}

{\label{874460}}

The increasing demand for wireless communication and the limitations of the electromagnetic spectrum have led to the development of more efficient methods for managing networks. To meet these needs, the 3rd Generation Partnership Project (3GPP) has established a standard, New Radio (NR), which includes novel designs and technologies to support fifth-generation (5G) networks \cite{5G_spec}. One of the key features of this new protocol is the inclusion of new bands at millimeter wave (mmWave) frequencies. These frequencies offer the potential for increased data rates by exploiting the spatial diversity available in these bands, which are currently less congested compared to traditional bands. However, operating at mmWave frequencies also introduces new physical challenges, such as severe path and penetration losses. To overcome these challenges, network density can be increased and beam-forming methods can be used \cite{van1988beamforming}.
\par
Although increasing network density has potential benefits, the deployment and operation of fiber between the Next-Generation Node Base Station (gNB) and the core network can be costly. Integrated Access and Backhaul (IAB) is a promising solution for successful 5G adoption as it allows for only a fraction of the gNBs to be connected to traditional fiber-like infrastructures, thus reducing redundant deployment and operational costs by utilizing spatial diversity \cite{5G_IAB}. The gNBs connected to fiber are called IAB donors, while the remaining gNBs are called IAB nodes and use multi-hop wireless connections for backhaul traffic. While IAB networks are cost-effective in terms of deployment and operation, ensuring reliable network performance remains a challenging research area due to their highly non-stationary nature. The dynamic nature of the topology, tight delay constraints, and limited information regarding the status of the network are some of the factors to be considered when supporting these requirements.
\par
Routing plays a crucial role in the context of network congestion control, where each destination may have multiple paths, and base stations monitor network conditions to make routing decisions. There are two main approaches for implementing routing algorithms in wireless networks: centralized and distributed. In a centralized approach, there is a central network processor that is responsible for path selection, while in a distributed approach, each node makes next-hop decisions based only on its own observations without knowledge of other nodes' decisions. In practical implementations, due to bandwidth limitations and multi-hop structures, information sharing is limited to the base station's neighborhood. This implies that base stations can only observe a part of the current network state, and enhance, when operating in a distributed manner next-hop transmission decisions are only based on partial observations.
\par
This paper focuses on the analysis of distributed routing algorithms developed for networks that exhibit physical limitations, specifically IAB-based networks. We design a deep reinforcement-learning-based algorithm to achieve optimal routing policies. One unique aspect of our approach is that, given that successful routing decisions rely heavily on packet destinations, we consider that it is beneficial for the agent to use knowledge of similar destinations when making routing decisions. To portray the similarity between these destinations, we use their \textit{relational} base-station associations, which refer to the base stations they are currently connected to. Our study differs from previous works that focused on designing routing policies to optimize packet paths to specific destinations without sharing information with other similar destinations. These previous algorithms are generally not suitable for a dynamic topology like the one in this study. We propose a novel algorithm for learning a decentralized routing policy using deep policy gradients. To make the learning efficient, we use the Advantage Actor Critic (A2C) \cite{actor_critic} algorithm, which is a combination of policy gradient \cite{reiforce_algorithm}, temporal difference estimation \cite{td_learning}, and deep neural networks, and enables learning from experience in an unknown environment with a large state space through interactions with the environment. Our proposed algorithm is called \textit{Relational A2C}.
\par
In the current study, we present numerical results that evaluate the performance of the Relational A2C algorithm in several network scenarios. We also compare the results obtained by Relational A2C with those from other methods such as \cite{boyan1994packet,actor_critic_for_adaptive_routing_hybrid_method,bellman1958routing,backpressure_routing}. Our results show that the proposed approach outperforms existing methods and is able to achieve performance comparable to that of centralized systems. To the best of our knowledge, this is the first work that addresses the routing problem in IAB networks using deep reinforcement learning (RL). Additionally, our algorithm formulates the routing problem as a joint optimization problem, which promotes agent cooperation and reduces selfish behavior, resulting in more efficient use of network resources.

\subsection{Related Work}
Routing in networks has been and still is the subject of extensive research, as seen in  \cite{akkaya2005survey}, \cite{mammeri2019reinforcement}. There is a large body of literature on routing strategies for wireless networks, including earlier protocols such as DSR \cite{johnson1996dynamic} and AODV \cite{AODV} for ad-hoc networks, and various routing protocols for delay and disruption tolerant networks (DTNs) \cite{DTNjain2004routing}, as well as strategies for resource constrained wireless networks (such as sensor networks or internet-of-things networks) \cite{sha2013multipath}. However, many existing routing protocols were designed for specific wireless network scenarios and may not be easily adaptable to other scenarios. For example, routing protocols for ad-hoc networks assume a connected network, while routing protocols for DTNs assume a disconnected network. In this study, we focus on routing in IAB networks that are characterized by dynamic topology changes and strict time constraints, and cannot be expected to be subject to these assumptions.
This has motivated the introduction of methods that can acquire a nearly optimal policy without requiring a-priori knowledge. A major technique that is capable of achieving this goal is RL, which is a class of machine learning algorithms that can learn an optimal policy via interaction with the environment without knowledge of the system dynamics (such algorithms are also known as model-free algorithms) \cite[Ch.~1]{sutton2018reinforcement}.
One of the most popular RL techniques is Q-learning \cite{watkins1992q}, which can  learn the optimal policy online by estimating the optimal action-value function.
Early works that applied Q-learning to network routing used the classical tabular Q-learning method \cite{boyan1994packet, choi1995predictive,kumar1998confidence_q_routing}. This system enables each device to forward a limited number of packets in each time slot; for this it receives a reward over the ACK signal. However, it becomes computationally difficult to apply this method when the state space becomes large. This issue has motivated the combination of deep learning \cite{lecun2015deep} with RL, giving rise to the deep RL class of algorithms. These algorithms have attracted much attention in recent years due to their ability to approximate the action-value function for large state and action spaces. Recently, the authors in \cite{mnih2015human} proposed a  deep RL-based algorithm called deep Q-network (DQN), which combines deep neural networks and  Q-learning. Recent studies that derived DRL-based algorithms for network routing problems can be found in \cite{stampa2017deep, valadarsky2017learning, Feat_Engineering_for_DRL, you2020toward, Hierarchical_Deep_Double_Q_Routing,actor_critic_for_adaptive_routing_hybrid_method, relational_drl}: 

A DRL approach for routing was developed in \cite{stampa2017deep} with the objective of minimizing the delay in the network. In this approach, a single controller finds all the paths of all source-destination pairs given the demand in the network, which represents the bandwidth request of each source-destination pair. However, this approach results in complex state and action spaces, and does not scale for large networks as it depends on a centralized controller. Moreover, in this approach, the state representation does not capture the network topology, which is highly dynamic in our scenario. Motivated by the high complexity of the state and action representations of the approaches proposed in \cite{stampa2017deep,valadarsky2017learning}, a feature engineering approach has recently been proposed in \cite{Feat_Engineering_for_DRL} that only considers some candidate end-to-end paths for each routing request. This proposed representation was shown to outperform the representation of the approaches proposed in \cite{stampa2017deep, valadarsky2017learning} in some use-cases. In \cite{you2020toward}, the authors employed a deep recurrent Q-Network (DRQN), to determine the routing policy, this is a mixture of a DQN and a long short-term memory (LSTM) network. Using device-specific characteristics, such as the last k actions taken and the next m destinations of packets in queues, the algorithm is trained for each device. The LSTM layer in the DRQN algorithm uses past observations for the prediction of the whole network state, which, in turn, allows the agent to select the next hop for a specific packet in the next time step.
The work in \cite{Hierarchical_Deep_Double_Q_Routing} applied an algorithm called hierarchical-DQN (h-DQN) \cite{kulkarni2016hierarchical}. h-DQN facilitates exploration in complicated environments by decomposing the original problem into a hierarchy of sub-problems such that higher-level tasks invoke lower levels as if they were primitive actions. In \cite{actor_critic_for_adaptive_routing_hybrid_method}, the authors used another RL algorithm called Actor Critic algorithm, which is a policy-based RL algorithm where the policy learned directly via parametrization. They compared their results with that of the algorithm in \cite{boyan1994packet,kumar1998confidence_q_routing} and showed that their proposed algorithm achieves better performance. 

\subsection{Paper Structure and Notations}
The organization of this paper is as follows. In Section \RNum{2}, we present the problem formulation and assumptions. In Section \RNum{3}, we provide the mathematical background for our proposed solution, including Markov Decision Processes (MDPs), RL, and Multi-Agent Reinforcement Learning (MARL). Section \RNum{4} presents our mathematical formulation for the problem and explains the rationale behind the chosen MARL approach. In Section \RNum{5}, we provide a review of existing routing algorithms. Section \RNum{6} describes our proposed algorithm, which is based on the A2C method and includes three different training paradigms, ranging from fully decentralized to centralized training. Simulation results, including a comparison with existing routing algorithms, are presented in Section \RNum{7}. Finally, in Section \RNum{8}, we conclude this work and discuss future research directions.
\par
Throughout this work, we use $\mathbb{N}$ to denote natural numbers, bold letters ( e.g., $\mathbf{X}$) to denote vectors, and $\mathbf{X_i}$ denotes the $i$th element in the vector $\mathbf{X}$, $i\geq0$. Further, $\mathbf{X_{t,i}}$ denotes the $i$th element in the vector $\mathbf{X}$ at the $t$th time-step, $i\geq0, t\in\mathbb{N}$. we use calligraphic letters to denote sets, e.g., $\mathcal{X}$, and the cardinality of a set is denoted by $|\cdot|$, e.g., $|\mathcal{X}|$ is the cardinality of the set $\mathcal{X}$. Lastly, $\mathbb{E}[\cdot]$ denotes the stochastic expectation.

\section{Problem Formulation}

\begin{figure}[H]
\begin{center}
\includegraphics[width=1.00\columnwidth]{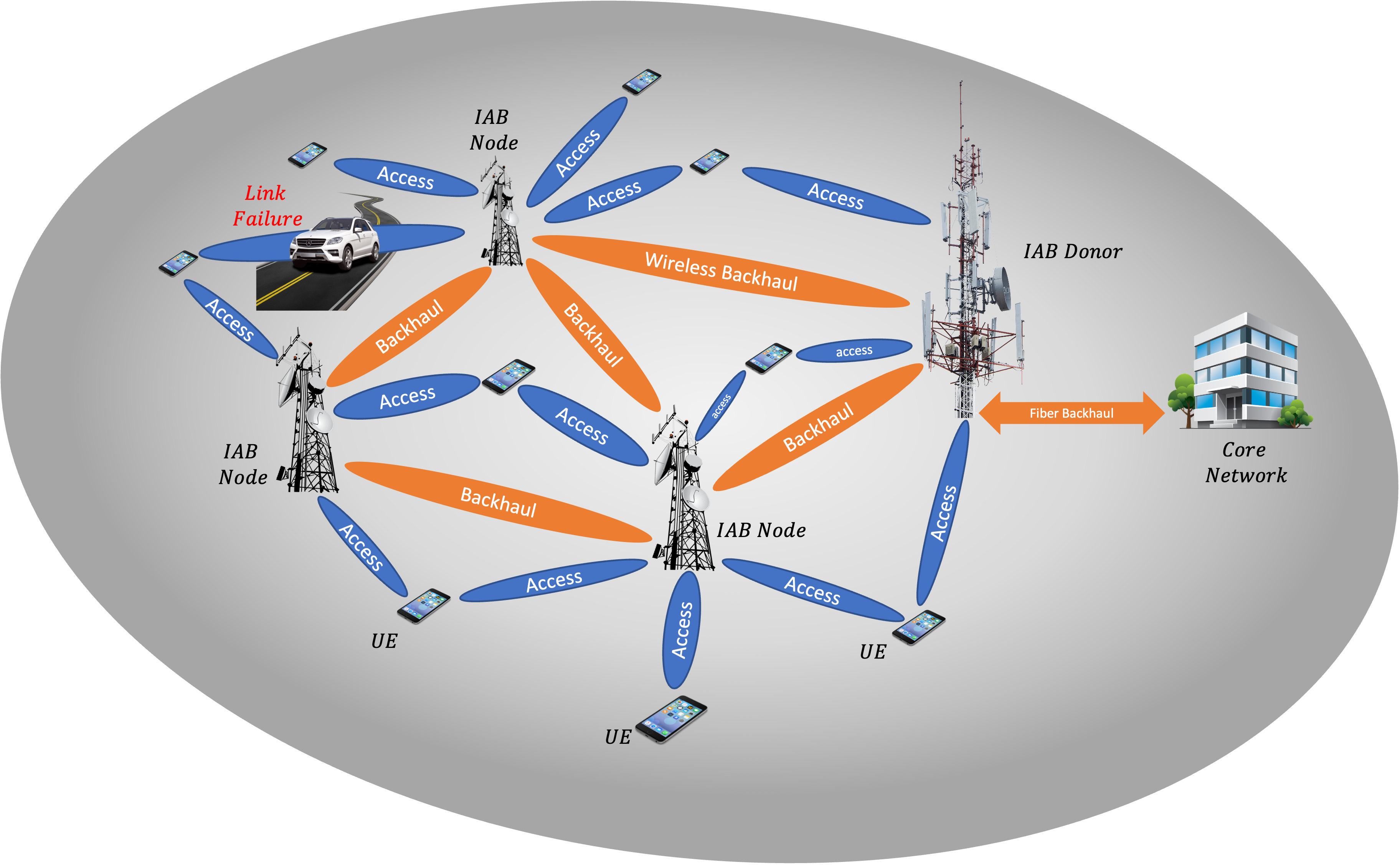}
\caption{IAB network illustration with 1 IAB donor, 3 IAB nodes and 9 UEs.}
\label{fig:problem_formulation}
\end{center}
\end{figure}
We consider a multi-hop IAB wireless network with an IAB donor, IAB nodes and User Equipments (UE) \cite{5G_IAB}, as shown in Fig. \ref{fig:problem_formulation}. The IAB donor is wired to the core network, whereas IAB nodes use wireless communication to backhaul their traffic to the core network via a multi-hop connection. Both the IAB donor and IAB nodes provide access and backhaul interfaces for the UE and IAB nodes, respectively. We model this network by an undirected weighted graph $\mathcal{G}=(\mathcal{N},\mathcal{L}, d),$ where $\mathcal{N}$, $\mathcal{L}$ denote the sets of nodes and wireless links, respectively, and $d:\mathcal{L}\rightarrow\mathbb{N}$ assigns a delay to each wireless link. There are three sets present in $\mathcal{N}$: a set $\mathcal{D}$ of the IAB donor, a set $\mathcal{B}$ of the IAB nodes and a set $\mathcal{U}$ of the UEs, i.e., $\mathcal{N} = \mathcal{D} \cup \mathcal{B} \cup \mathcal{U}$. Each of the nodes $n \in \mathcal{D} \cup \mathcal{B}$ is equipped with an independent buffer queue, and a transceiver with beam-forming and routing capabilities. Each of the links $(n,m)\in\mathcal{L}$ is a bidirectional link between node $n$ and node $m$, portraying a time-varying wireless channel. 
\par
We assume that time is slotted by $t\in\mathbb{N}$, and, for simplification, we assume that packets are constant in length and that transmission rates are limited to transmit integer numbers of packets per slot. As another simplification, we represent the wireless link's spectral efficiency as a delay between the two nodes of the graph using the mapping $d$. This assumption is based on the fact that links with varying degrees of spectral efficiency will require a different number of transmissions to transfer the same amount of data, so using a link with low spectral efficiency rather than a link with high spectral efficiency will result in an increased delay.
\par
Once an IAB node or UE is activated, it is connected to an already active node, i.e., either an IAB donor or another IAB node which has a path to an IAB donor. Thus, we build the network topology in an iterative greedy fashion similarly to in \cite{Distributed_Path_Selection}, where we set constraints over the maximal number of IAB parents ($P^{max}_{parent}$), IAB children ($C^{max}_{children}$), number of users each base station has ($U_{children}^{max}$), and the number of associated base stations each user has ($U_{parent}^{max}$). It should be noted that by using the following topology generation scheme, we receives a connected graph, i.e., there is a path from any base station to any other node in the network. We assume that all our nodes operate at mmWave bands for backhaul and access transmission and reception (in-band backhauling) with beam-forming capabilities. Therefore, similarly to in \cite{simsek2020iab}, we disregard the interference between non-assigned nodes since narrow beam mmWave frequencies have a power limit rather than an interference limit. 
\par
Considering the stochastic nature of packet arrivals, we use a Poisson process with parameter $\lambda$, which we refer to as the network load. In each time-slot we sample the arrival process, which corresponds to the number of packets generated. In order to distribute the packets across the network's base stations, the donor receives the number of packets that corresponds to its available wireless bandwidth. The remainder of the packets are distributed uniformly among the network's base stations. As each packet is generated, it is given a time limit, which is referred to as its Time To Live (TTL). If this TTL expires, the packet is dropped. To prioritize packets with lower TTLs, the base-stations use an unlimited-sized prioritized queue based on TTL. Consequently, the base station always processes packets in accordance with the prioritized queue.
\par
We denote $\mathcal{N}_i$ as the set of neighbors of node $i \in \mathcal{N}$, i.e., $(i,j) \in\mathcal{L},\text{ } \forall j\in\mathcal{N}_i$, and let $C,K\in\mathbb{N}$ be the number of wireless channels and activated base stations, respectively.
In each time step, each base station extracts a set of packets from its queue. This is followed by deciding where to send each packet, which means that the $i^{th}$ base station has to choose one destination from $\mathcal{N}_i$ for each packet. In our model, users may move or change their base-station associations between two consecutive time slots, which would be resolved in a change of the network topology.
In addition, the edge delays are slowly varying around their initial values to modulate the changes in the wireless medium.

\section{Background}
In the following section, we introduce the mathematical foundations, on which we will base our proposed solution. We begin by exploring MDPs and Partially Observe MDPs (POMDPs). Then, we describe the tools from the field of RL and MARL that we use as the basis for our method.
\subsection{MDPs and Partially Observe MDPs Preliminaries}
We can define MDP as a tuple, $<\mathcal{S},\mathcal{A},\mathcal{R},\mathcal{P}>$, where $\mathcal{S},\mathcal{A}$ and $\mathcal{R}$ are the sets of environment states, actions and rewards, respectively. In addition, let $\mathcal{P}$ be the set of the probabilities $\big{\{}\Pr(s',r|s,a)\big{\}}_{s',s\in\mathcal{S}, a\in\mathcal{A}, r\in\mathcal{R}}$, whereas the probability $\Pr(s',r|s,a)$ represents the agent probability of observing next state $s'$ and reward $r$ after being at state $s$ and performing action $a$. The agent's policy can be represented as the following mapping $\pi:\mathcal{S}\times\mathcal{A}\rightarrow [0,1]$, which represents a mapping from the current state to the probability distribution on the action space.
\par
A Partially Observe MDP (POMDP) is defined as a tuple $(\mathcal{O}, \mathcal{S}, \mathcal{A}, ,\mathcal{R}, \mathcal{P})$ \cite{puterman2014markov}.
An agent interacting with the environment at state $\textbf{s} \in \mathcal{S}$ observes an observation $o(s) \in \mathcal{O}$. After observing $o$, the agent selects an action $a \in \mathcal{A}$ based on this observation, which means that now the agent's policy is determined by its observations. That is, policy can be represented as the following mapping $\pi:\mathcal{O}\times\mathcal{A}\rightarrow [0,1]$, which represents a mapping from the current observation to the probability distribution on the action space.
After performing an action, the remainder of the decision process is the same as for the MDP case.
\par
\subsection{RL Preliminaries}
We examine a general model-free RL framework \cite{sutton2018reinforcement} applied to a specific MDP, in which the agent interacts with the environment and learns to accomplish the task at hand by a series of discrete actions. In this case, the agent does not assume any prior knowledge of the underlying environment statistics. Figure \ref{fig:decision_processes_frameworks} describes the decision making processes between the agent and the environment: at each discrete time step $t$, the agent observes the current state of the environment $S_t$ and executes an action $A_t$ according to its policy $\pi$. Then, the agent receives an immediate reward $R_t$ and the environment transitions to a new state $S_{t+1}$, based on the transition kernel $\Pr(S_{t+1}|S_{t},A_{t})$, to the next state $S_{t+1}$.

\begin{figure*}[ht!]
\begin{center}
    \label{fig:mdp_framework}{%
      \includegraphics[width=0.6\textwidth]{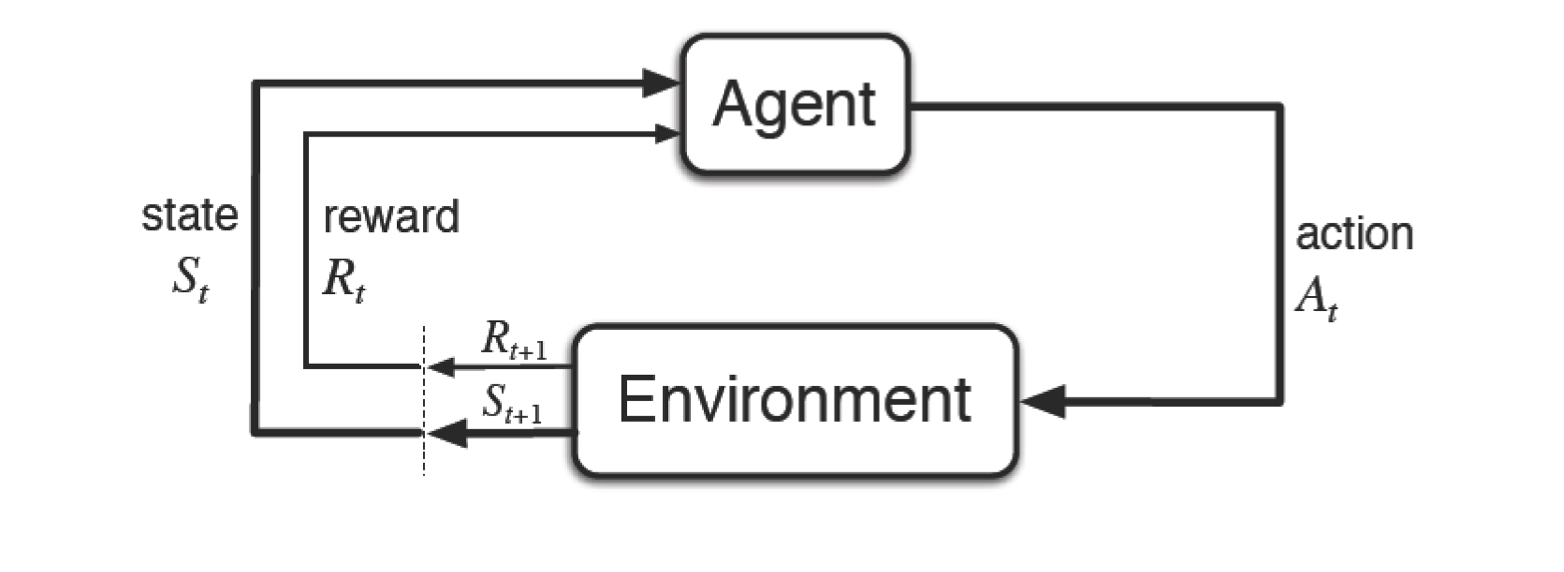}}
\caption{Decision process framework.}
\label{fig:decision_processes_frameworks}
\end{center}
\end{figure*}    

Under this framework, the agent's goal is to select actions that maximize the expected cumulative discounted reward  $G_t$, where we define $G_t$ as follows:
\begin{align}
    G_t = \sum_{n=0}^{\infty}\gamma^n R_{t+n+1},\; \gamma \in [0,1) \label{eq:1}.
\end{align}
The discount factor $\gamma$ determines how much immediate rewards are favored over more distant rewards. For a general MDP framework, the action-value function $Q_{\pi}(S_t,A_t)$ represents the expected accumulated discounted reward starting from state $S_t$, picking action $A_t$, and following policy $\pi$ afterwards, whereas the value function $V_{\pi}(S_t)$ represents the expected accumulated discounted reward starting from state $S_t$, and following policy $\pi$ \cite[Ch.~3]{sutton2018reinforcement}:
\[Q_{\pi}(S_t,A_t) \triangleq E_\pi[G_t|S_t, A_t] \text{,  } V_{\pi}(S_t) \triangleq E_\pi[G_t|S_t].\] 
The optimal policy $\pi^*$ is a policy that satisfies $Q_*(S_t,A_t)\triangleq Q_{\pi^*}(S_t,A_t) \geq Q_{\pi}(S_t,A_t)$ for any policy $\pi$ and for every possible state-action pair. By continuously interacting with the environment, the RL agent aims to learn the optimal policy $\pi^*$.
\subsection{MARL Preliminaries}
\begin{figure}
    \centering
    \includegraphics[width=0.6\textwidth]{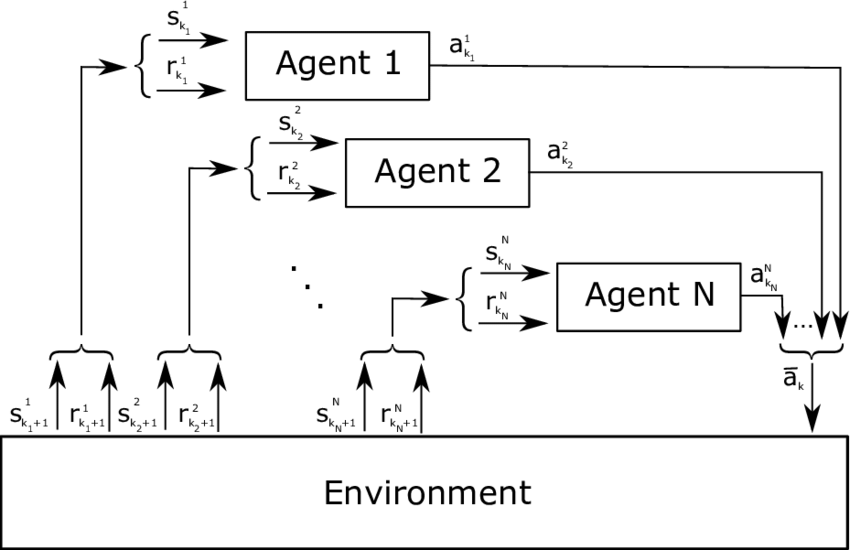}
    \caption{Multi-agent decision process framework.}
    \label{fig:multiAgentMDP}
\end{figure}
MARL also addresses sequential decision-making problems, but with more than one agent involved, as illustrated in Fig. \ref{fig:multiAgentMDP}. In particular, both the evolution of the system state and the reward received by each agent are influenced by the joint actions of all agents. More intriguingly, each agent has has to optimize its own long-term reward, which now becomes a function of the policies of all other agents \cite{marl_selective_overview}. 
\par
As we are interested in optimizing the network performance in this study, it is essential that our agents act cooperatively. In a fully cooperative setting, all agents will tend to share a common reward function, such as $R_1 = R_2 = \cdots = R_N = R$. We note that this model may also be referred to as a multi-agent MDP (MMDP) in the AI community \cite{boutilier1996planning}. In this model, the value function and Q-function are identical for all agents, which enables the application of single-agent RL algorithms, where all agents are coordinated as one decision maker \cite{marl_selective_overview}. 
\par
Besides the common-reward model, another slightly more general and emerging model for cooperative MARL considers team-average rewards \cite{zhang2018fully,marl_selective_overview}. Specifically, agents are allowed to have different reward functions, which may be kept private for each agent, while the goal for cooperation is to optimize the long-term reward corresponding to the average reward $R \triangleq \frac{1}{N}\sum_{i=0}^{N-1} R_i$. The average-reward model, which allows more heterogeneity among agents, includes the model above as a special case. It also preserves privacy among agents, and facilitates the development of decentralized MARL algorithms \cite{zhang2018fully,marl_selective_overview}.
\section{RL for IAB Network Routing}
In the following section, we formulate and evaluate the IAB network routing problem using a Multi-Agent POMDP. First, we propose and analyze an approach to solve this problem. Next, we proceed to formulate it mathematically as a Multi-Agent POMDP. We conclude our discussion by describing our evaluation metrics.
\par
During the agents' training phase, their experience is formed from tuples of $<s, a, r>$ that are chained together into time sequences by the next state, $s'$. 
We can define this experience in two ways. On the one hand, we can select the base station as our decision-maker. Hence, every base station acts as an independent agent, making its own decisions about how to forward packets. On the other hand, we can select the packet as our decision-maker. By doing this, every packet in the network is an agent and makes an independent decision when it reaches the front of a base station's queue.
In our work, we employed the packet approach to define our decision process mathematically due to its convenience. In light of the fact that our network contains multiple packets, we consider this to be a multi-agent problem.

\subsection{Formulating IAB Routing Using a Multi-Agent POMDP Framework}\label{sec:iab_routing_formulation}
This section outlines the mathematical formulation of our problem. Due to the presence of multiple agents with a joint goal, we formulate the problem of multi-agent routing as a as a Milti-Agent POMDP \cite{puterman2014markov} with discrete actions. Each agent observes statistics relevant to itself and does not observe the entire global state. We denote the the Multi-Agent POMDP by a tuple  $(\mathcal{S},\mathcal{O},\mathcal{A}_1,\mathcal{A}_2,\ldots,\mathcal{A}_N,\mathcal{P},\mathcal{R},N,\gamma)$, where $N$ is the number of agents and $\mathcal{S}$ is the environment state space. Environment state $\textbf{s} \in \mathcal{S}$ is not fully observable. Instead, agent $i$ draws a private observation $\textbf{o}_i \in \mathcal{O}$ that is correlated with $\textbf{s}$. $\mathcal{A}_i$ is the action space of agent $i$, yielding a joint action space $\mathcal{A}=\mathcal{A}_1\times \mathcal{A}_2\times\cdots\times \mathcal{A}_N$. $\Pr(\textbf{s}',r|\textbf{s},\textbf{a})\in\mathcal{P}$ is the state-reward transition probability, where $r,\textbf{s}, \textbf{s}',\textbf{a}\in\mathcal{R}\times\mathcal{S}\times\mathcal{S}\times\mathcal{A}$. $\mathcal{R},\gamma$ represent the available rewards set and the discount factor, respectively. Agent $i \in \{1,2,\cdots ,N\}$ uses a policy $\pi_i: \mathcal{O}\times \mathcal{A}_i\rightarrow [0,1]$ to choose actions after drawing observation $\textbf{o}_i$. After all agents have taken actions, the joint action $\textbf{a}$ triggers a state transition $\textbf{s} \rightarrow \textbf{s}'$ based on the state transition probability $\Pr(\textbf{s}'|\textbf{s},\textbf{a})$. 
\par
In Mulit-Agent POMDP,  MARL can be used as a computational tool. Below, we specify each component using the MARL definitions.
\begin{itemize}
    \item \textbf{Observations.}
    The agent can only observe information relevant to the packet it controls. Specifically, we consider the packet's current node, time to live (TTL), and queue delay in this work. Let $\textbf{o}_i = (n, t, QueueDelay(i,n))$ be the agent $i$ observation. The scope of $n$ in this context is limited to only the base stations and the final destination, i.e., other users cannot relay messages.
    \item \textbf{Actions.} If agent $i$ is authorized to perform a wireless hop based on $\textbf{o}_i = (n,t)$, the allowable action set includes all wireless links that are available from node $n$. For example, $a_i = l_{n,n'}$, where $l_{n,n'}\in\mathcal{L}$ represents the link between nodes $n$ and $n'$. Let $\textbf{a} = (a_1,a_2,\cdots,a_N)$ be the joint action. Whenever the agent does not have permission to conduct a wireless hop, its action will be defined as null.
    \item \textbf{Transitions}. Joint action $\textbf{a}$ triggers a state transition $\textbf{s} \rightarrow \textbf{s}'$ with probability $\Pr(\textbf{s}' |\textbf{s}, \textbf{a})$ which depends on the dynamics of the environment and on the frequency at which the agent is polled to provide an action. In this case, the frequency is dependent on the duration of each time slot in the system.
    \item \textbf{Reward.} Let $D_{i}$ represent the immediate delay of the $i^{th}$ agent. To be more specific, $D_i$ consists of two components that cause delay to the $i^{th}$ agent: the instant wireless link delay and the delay caused by waiting at the current base-station queue. We define the current agent's delay as the sum of these two terms. If we define the observation as $\textbf{o}_i=(n,t)$ and the action as $a_i=l_{n,n'}$, we can denote the immediate delay of the agent as $D_{i} = -(q^n_i + d((n,n')))$, where $q^n_i$ represents the delay induced by the $n^{th}$ node queue to the $i^{th}$ agent. Accordingly, we define the agents' joint reward using the immediate delay representation as follows: $R\triangleq\frac{1}{N}\sum_{i\in\mathcal{I}}D_i\in\mathcal{R}$, where $\mathcal{I}$ represents the set of active agents.
\end{itemize}

Let $\Pi\triangleq\times_{i=0}^{N-1}\pi_i$ represent the joint agent policy.
Let $\mathbf{s}_t\in\mathcal{S}$ denote the state of the network at time $t\in\mathbb{N}$.
Our objective is to derive an RL-based algorithm to identify the set of policies $\{\pi_i\}_{i=0}^{N-1}$ that maximize the expected accumulated discounted reward over 
a finite time horizon, ~i.e, 
\begin{equation}
\label{eqn:RL-objective}
\Pi^{\star}=\operatornamewithlimits{\rm argmax}\limits_{\Pi}\Bigg\{ \mathbb{E}_{\Pi}\bigg[\sum_{t=1}^{T}\gamma^{t-1}R_{t+1}\Big|\mathbf{s}_1\bigg]\Bigg\}.
\end{equation}

\subsection{Evaluation Metrics}
A reactive routing scheme is used in a multi-hop IAB network to dynamically minimize packet delay while ensuring that packets reach their destination on time. There might be multiple hops, links with inefficient spectral efficiency, and nodes with overloaded queues in a packet's path, all of which may cause delay. There are various metrics to measure or estimate the congestion within the network. In our scenario we define our congestion estimation using the following metrics: 
\begin{itemize}
    \item Packet latency - The time it takes for a packet to travel from its source to its destination.
    \item Arrival Ratio - The percentage of packets that made it to their destination successfully.
\end{itemize}
This multi-objective problem aims to minimize the packet latency while simultaneously maximizing the arrival ratio. Therefore, it may suffer from a Pareto-front, which means that optimizing w.r.t. one objective, leads to a sub-optimal solution w.r.t. another objective \cite{pareto_front_optimization}.
Despite the fact that the network performance measurements are well defined, an individual agent does not necessarily have access to their signals. For example, the arrival ratio represents a metric which is dependent upon the entire network; due to its multi-hop structure, at each time slot an individual cannot even obtain a good estimation of this value.
\section{Existing Routing Algorithms}
The following section describes existing solutions for solving a routing problem. Throughout this section, we explore and modify those algorithms to solve the problem of routing in an IAB network.
\subsection{Centralized-Routing} 
In this policy, the algorithm has access to the full network state. During each time step, the next hop is determined by computing the shortest path to the packet's destination. By observing the full state while calculating the shortest path, this algorithm also accounts for delays caused by queues at other base stations along the packet's path. Even though this algorithm has a high complexity, it achieves network performance improvement with full knowledge of the network's state, and serve as a benchmark in comparison to a decentralized approach. 
\subsection{Minimum-Hop Routing}
In this policy, the algorithm has access only to the links' delay \cite{bellman1958routing}. Based solely on those link delays, the next hop is determined for each time step by computing the shortest path to the packet's destination. In addition to serving as a benchmark, this algorithm is intended to analyze the influence of queue delay on the resulting delay.
\subsection{Back-Pressure Routing}
In this policy \cite{backpressure_routing}, each base-station stores a queue for each possible destination. The following procedure describes this algorithm. In each time slot, each node calculates the difference between the specific queue length and the corresponding queue length located at each of its neighboring nodes (this calculation occurs for each queue). By utilizing this difference for all available destinations, each base station is able to make two different decisions. First, it makes a greedy decision regarding which packets to extract from the queues based on the size of the differentiation ( in other words, the destinations with the greatest degree of differentiation will be selected). Second, the node will determine the packet's next hop based on a greedy decision (the neighbor with the highest differentiation score).

\subsection{Q-Routing} 
In this policy, each node uses an off policy iterative method termed Q-learning \cite{boyan1994packet}. Next, Q-learning is first used to learn a representation of the network state in terms of Q-values and then these values are used to make routing decisions. Given this representation of the state, the action $a$ at node $n$ is to find the best neighbor node to deliver the a packet that results in lower latency for the packet to reach its destination.
The following procedure describes this algorithm.
As soon as node $n$ extracts a packet destined to node $d$ from his queue, it selects the next-hop decision based on the $\epsilon$-greedy policy w.r.t. its Q-values, i.e.,
\begin{equation}
    y=
    \begin{cases}
        UniformRandom(\mathcal{A}_{n}) & \text{\textit{w.p.} } \epsilon, \\       \operatornamewithlimits{argmax}\limits_{a\in\mathcal{A}_{n}}\hat{Q}^{(n)}(a,d) & \text{\textit{w.p.} } 1-\epsilon.
    \end{cases}
\end{equation}
Once the packet has been sent from node $n$ to neighboring node $y$, node $y$ will send its best estimate back to node $n$, $\operatornamewithlimits{max}\limits_{a'\in\mathcal{A}_{y}} \hat{Q}^{(y)}(a',d)$ for the destination $d$ back to node $n$ over the ACK signal. This value essentially estimates the remaining time in the journey of the packet. Following that, the Q-value is modified by the following formula:
\begin{equation}
\hat{Q}^{(n)}_{new}(y,d) = \hat{Q}^{(n)}_{old}(y,d) + \alpha\cdot\big(r + \gamma\cdot \operatornamewithlimits{max}\limits_{a'\in\mathcal{A}_{y}} \hat{Q}^{(y)}(a',d) - \hat{Q}^{(n)}_{old}(y,d)\big).
\end{equation}

\subsection{Full Echo Q-Routing}
This algorithm is a variant of the Q-Routing algorithm described above, and addresses the well-known issue of exploration against exploitation \cite{boyan1994packet}; similarly to Q-Routing, it utilizes an iterative off-policy method termed Q-Learning. In order to eliminate the problem of exploration against exploitation, this algorithm executes the following procedure. As soon as node $n$ extracts a packet destined to node $d$ from its queue, it selects the next-hop decision based on the greedy policy w.r.t. its Q-values, $y=\operatornamewithlimits{argmax}\limits_{a\in\mathcal{A}_{n}}\hat{Q}^{(n)}(a,d)$. Once the packet has been sent from node $n$ to neighboring node $y$, node $n$ receives information from all neighboring nodes, which transmit their respective best estimates for packets destined to node $d$, i.e., 
$\operatornamewithlimits{max}\limits_{a'\in\mathcal{A}_{y}} \hat{Q}^{(y)}(a',d), \forall y\in\mathcal{N}_n$. Using those estimations, node $n$ modifies its Q-values for each neighbor by utilizing the following formula:
\begin{equation}
\hat{Q}^{(n)}_{new}(y,d) = \hat{Q}^{(n)}_{old}(y,d) + \alpha\cdot\big(r + \gamma\cdot\operatornamewithlimits{max}\limits_{a'\in\mathcal{A}_{y}}{\hat{Q}}^{(y)}(a',d) - \hat{Q}_{old}^{(n)}(y,d)\big), \forall y\in\mathcal{N}_n.
\end{equation}
\subsection{Hybrid Routing}
In this algorithm, a Q-Routing agent is trained simultaneously with an on policy iterative method \cite{actor_critic_for_adaptive_routing_hybrid_method}. In this case, Q-learning is used to learn a representation of the network state in terms of Q-values, and then Hybrid routing uses these values to update the agent's policy parameters by using the Actor-Critic method. As soon as node $n$ extracts a packet destined to node $d$ from its queue, it selects its next action based on sampling its policy distribution, $\pi_n(d,\cdot;\boldsymbol{\theta}_n)$. Then, it sends the packet to one of its neighboring nodes $y$.
The corresponding Q-value is updated based on a Q-Routing update rule, and then its policy parameters $\boldsymbol{\theta_n}$ are updated according to the following formula:
\[
\boldsymbol{\theta_n} \leftarrow \boldsymbol{\theta_n} + \alpha \cdot \nabla_{\boldsymbol{\theta}_n}\log\pi_n(y,d;\boldsymbol{\theta}_n)\cdot\big(r+\gamma\max_{a'\in\mathcal{A}_y} \hat{Q}^{(y)}(a',d)-\max_{a\in\mathcal{A}_n}\hat{Q}^{(n)}(a,d)\big).
\]

\section{Our Proposed Multi-Agent Relational A2C Routing Algorithm}
In this study, we are primarily interested in exploring a decentralized solution to the problem of routing in an IAB network. In the following section, we present our novel solutions to this issue. First, we discuss the motivation behind our proposed solution and the challenges it seeks to address. Following this, we describe the main characteristics of our solution. Finally, we present three different training paradigms within our approach, ranging from fully decentralized training to centralized training. 
\par
As a first step, we attempted to solve the above task using traditional RL techniques such as Q-routing\cite{boyan1994packet}, Full-Echo Q-Routing\cite{boyan1994packet}, and Hybrid Routing\cite{actor_critic_for_adaptive_routing_hybrid_method}. These methods have not achieved a high degree of generalization due to the challenges posed by the above task, such as partial observability, a large state space, and multi-agent optimization. Essentially, these methods assume that each agent acts independently and does not share their experience with other agents, resulting in performance degradation as a result of insufficient correlation between reward signals, network state, and other agents' policies.
\par
Our proposed solution addresses these issues by formulating this problem as a Mutli-Agent POMDP as described in Sec. \ref{sec:iab_routing_formulation}. We define our algorithm objective in the same manner as described in Eq. (\ref{eqn:RL-objective}) to encourage cooperation among the different agents. Furthermore, we leverage the homogeneity between destinations in order to support an invariant number of users, as we might encounter in a real-world scenario. To this end we categorize destinations into groups based on their \textit{relational} base-station association, i.e.,
which base station they are currently connected to. Through this categorization, policy and value functions are shared by each group. As part of this process, each agent uses an iterative online on-policy method called Advantage Actor-Critic (A2C)\cite[Ch.~13]{sutton2018reinforcement}. According to this scheme, the actor decides what action to take and the critic informs the actor of its effectiveness and how it should be adjusted. Based on the current observation, critic produces an estimated representation of the network state, and the actor uses this information to update its policy. Every agent in the network represents its own strategy through its actor. To contend with the issue of the large size of the state space, we propose using neural networks to approximate both the actor and the critic. Consequently, since we categorize agents according to their relational association with base stations, we refer to this algorithm as \textit{Relational A2C}.

In this section, we present three algorithms based on our method, each in its own subsection. The first algorithm is Relational A2C, in which training is centralized. Specifically, all agents use the same global actor and global critic. Figure \ref{fig:Illustration_relational} shows how these are updated based on information from all of the IAB stations. The second algorithm is the Dec-Relational A2C algorithm, in which each base station has a local actor and critic that are trained based on the local base station experience. Figure \ref{fig:Illustration_dec_relational} illustrates how the information from the base station is used to update these models.
Lastly, the third algorithm is based on federated learning \cite{fedAVGpaper}, which we refer to as Fed-Relational A2C. Most of the time, this algorithm operates in a decentralized manner, but once within a given period of time, it converts the weighted averages of these local model weights into a global model, which then is broadcast to the base stations.

\begin{figure*}[h!]
  \subfloat[Relational A2C Learning Diagram. \label{fig:Illustration_relational}]{%
      \includegraphics[ width=0.4\textwidth]{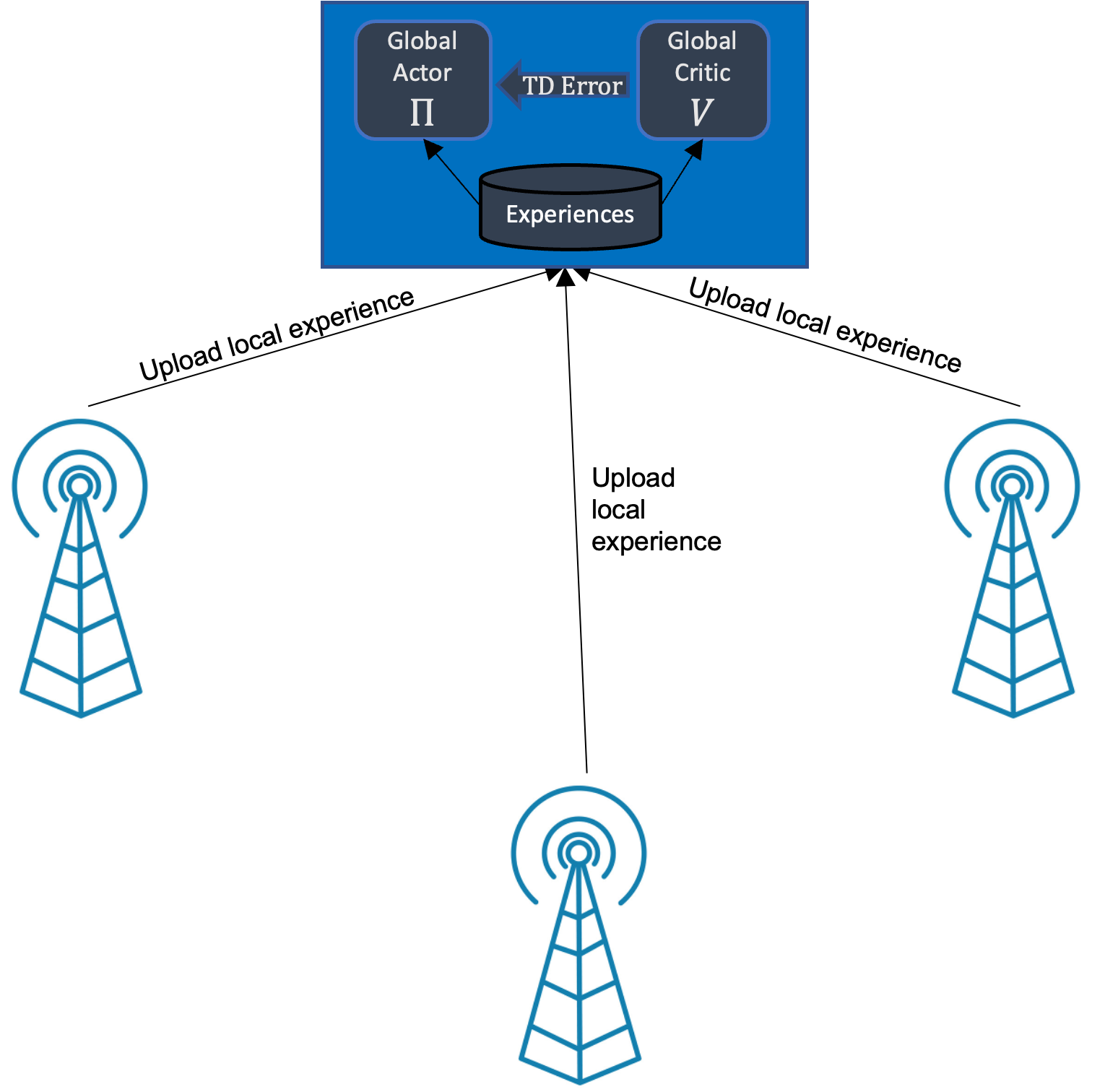}}
\hspace{\fill}
  \subfloat[Dec-Relational A2C Learning Diagram. \label{fig:Illustration_dec_relational}]{%
      \includegraphics[ width=0.49\textwidth]{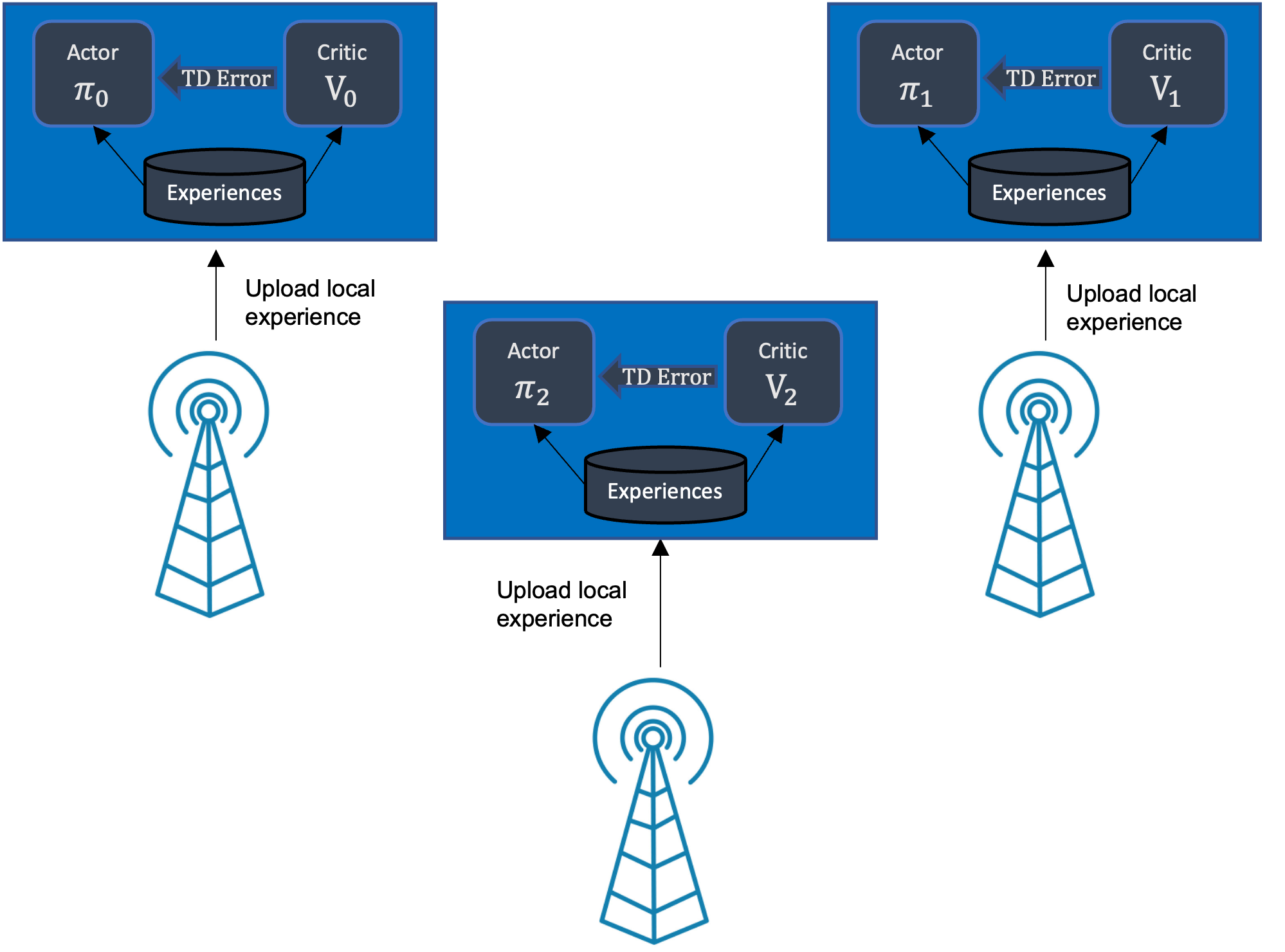}}\\
\caption{Illustration of centralized and decentralized training paradigms for our proposed method.}
\label{fig:Illustration_dec_relational_and_relational}
\end{figure*}

\subsection{Relational A2C}
In this section, we present the general idea behind our method and explain our centralized training solution. As a first step, we unify all packets destined to the same destination in accordance with the same policy. In the following step, for any possible graph topology, denoted as $\mathcal{L}^\star\triangleq\{\mathcal{L}|\mathcal{L}\text{ }is\text{ }a\text{ }set\text{ }of\text{ }graph\text{ }edges\text{ }that\text{ }forms\text{ }network\text{ }topology\}$, we define a mapping $H:\mathcal{N}\times\mathcal{L}^\star\rightarrow\{0,1\}^{K}$ that maps agents destined to nodes $n\in\mathcal{N}$ to a given group.
In this case, $K$ and $|\{0,1\}^{K}|$ represent the numbers of available base stations and different groups that our agents are divided into, respectively. Next, we define our categorization mapping, 
$H:\mathcal{N}\times\mathcal{L}^\star\rightarrow\{0,1\}^{K}$ as follows:
\begin{equation*}
    H(i,\mathcal{L})[j] \triangleq 
    \begin{cases}
        1, & \text{if } (i,j) \in \mathcal{L},\\
        0, & \text{if } (i,j) \notin \mathcal{L}.
    \end{cases}
\end{equation*}
\begin{figure}[!t]
\begin{center}
\includegraphics[width=0.75\textwidth]{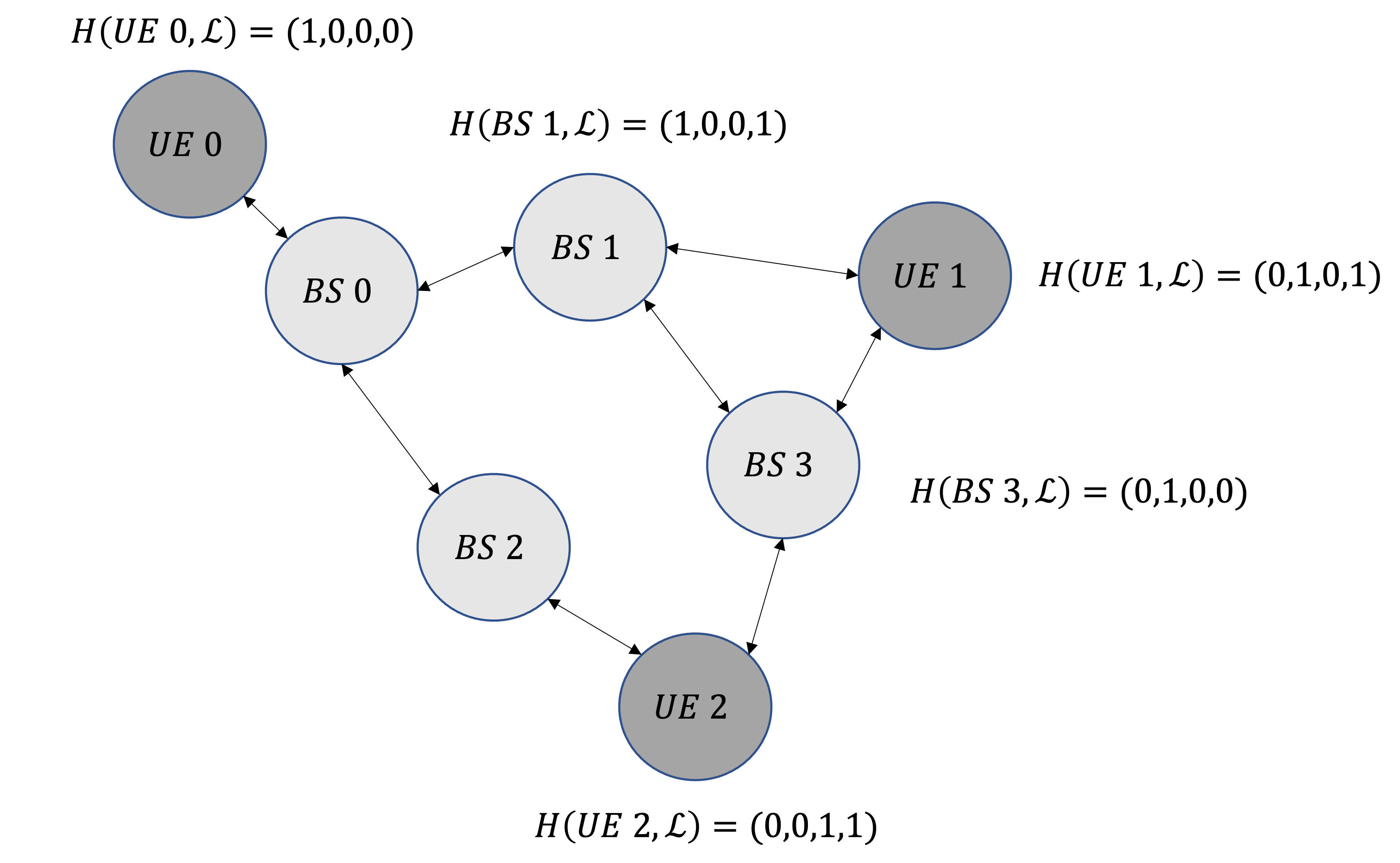}
\caption{An illustration of the categorization mapping $H$ for a scenario with $K=4$.}
\label{fig:categorization_mapping}
\end{center}
\end{figure}
\par
It is noteworthy that according to this mapping destinations are grouped according to their relational base-station associations. Furthermore, Fig. \ref{fig:categorization_mapping} demonstrates the functionality of this mapping. In addition, the division of our system's agents into different groups is intended to result in a significant reduction in the number of agents. As a result, multiple agents can cooperate under the same goal and share information with similar groups, thereby suppressing the non-stationarity issue associated with multi-agent systems \cite{actor_critic_for_load_balancing}. 
\par
Next, we present the centralized training paradigm, namely \textit{Relational A2C} (Fig. \ref{fig:Illustration_relational}). In view of the fact that we are utilizing neural networks, we would like to prevent bias in our input and categorize agents in a meaningful way. Thus, we propose encoding the observations of the $n^{th}$ agent under the assumption that it is located at the $i^{th}$ node as follows:
\begin{equation}
    \textbf{o}_n \triangleq [oneHot(i), t, H(destination(n),\mathcal{L}), QueueDelay(n,i)],
\end{equation}
where we define $oneHot:\mathcal{N}\rightarrow \{0,1\}^{K}$ as follows:
\begin{equation*}
    oneHot(n)[j] \triangleq 
    \begin{cases}
        1, & j = n,  \\
        0, & \text{otherwise}.
    \end{cases}
\end{equation*}
The last variables, $t \text{ and } QueueDelay(n,i)$, represent the TTL and the queue delay of agent $n$ at the node $i$, respectively.
\par
Let $\Pi(\cdot;\boldsymbol{\theta}), \hat{V}_{\Pi}(\cdot;\textbf{w}), \boldsymbol{\theta} \in \mathbb{R}^{L_1}, \textbf{w} \in \mathbb{R}^{L_2}, L_1,L_2 \ll |\mathcal{A}|\cdot|\mathcal{S}|$ be our actor and critic representations, respectively, as depicted in Fig \ref{fig:actor_critic_network_architecture}. In spite of the fact that we are only provided with one actor and one critic for all groups, we are still able to differentiate between them due to the mapping $H$. 
Considering this representation, and assuming the packet $n$ is located at the $i^{th}$ base station, the packet strategy determines the next hop decision by sampling the corresponding actor distribution $\Pi(\textbf{o}_n;\boldsymbol{\theta})$. Next, the $i^{th}$ base station sends the agent via a packet to its next hop decision. After that, the base station receives feedback through the ACK signal that contains the critic's estimation for the next hop and the agent's instant delay, $\hat{V}_{\Pi}(\textbf{o}'_n;\textbf{w})$ and $D_n$, respectively. The critic's value represents the next-node estimate of remaining time in a packet's journey to the agent destination, when the agent follows policy $\Pi$. 
\begin{figure}[!t]
\begin{center}
\includegraphics[width=\textwidth]{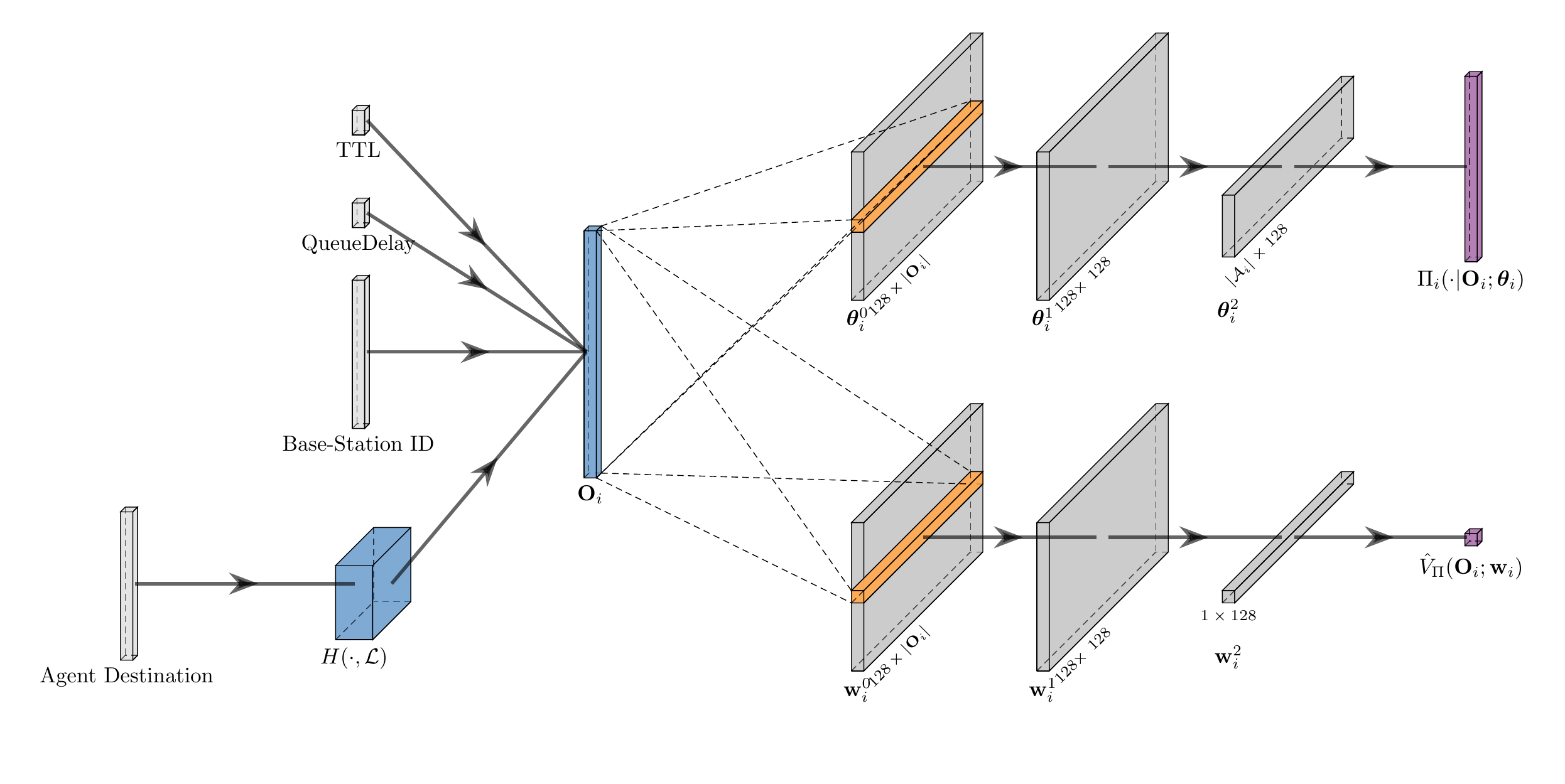}
\caption{Relational A2C neural architecture located at the $i^{th}$ base station.}
\label{fig:actor_critic_network_architecture}
\end{center}
\end{figure}

\par
Next, we consider $\hat{V}_{\Pi}(\textbf{o}_{t,n};\textbf{w})$ as the estimation of the $n^{th}$ agent averaged delay given the current observation, i.e., $\mathbb{E}[\sum_{k=0}^{T-t-1}\gamma^{k}\cdot D_{n,t+1+k}|\textbf{o}_{t,n},\Pi]$. We derive the following equality based on the Bellman equation for a single agent scenario applied for the $n^{th}$ agent:
\[
\hat{V}_{\Pi}(\textbf{o}_{t,n};\textbf{w})= \mathbb{E}[D_{n,t+1} + \gamma\cdot\hat{V}_\Pi(\textbf{o}_{t+1,n};\textbf{w})|\textbf{o}_{t,n},\Pi].
\]
Hence, we propose to update the critic by minimizing the estimation error.
This is achieved by minimizing the following objective $\mathcal{T}$ w.r.t. $\textbf{w}$:
\begin{equation} \label{eq:critic_gradient}
\mathcal{T}(\Pi)=\mathbb{E}\bigg[\bigg(\hat{V}_{\Pi}(\textbf{O}_{n};\mathbf{w}) - (D_{n} +\gamma\cdot \hat{V}_{\Pi}(\textbf{O}_{n}';\mathbf{w})\big)\bigg)^2\bigg|\Pi\bigg],
\end{equation}
where $\textbf{O}_{n}, D_{n} \text{ and } \textbf{O}'_{n}$ represent the random variables of observation, instant delay and next observation for a random agent. We update the critic's parameters using the gradient descent method:
\[
\textbf{w} \leftarrow \textbf{w} -\alpha\cdot\nabla_{\mathbf{w}}\mathcal{T}(\Pi),
\]
for some learning rate $\alpha\in(0,1)$. Next, we seek to identify a set of policy parameters $\boldsymbol{\theta}$ that maximizes the expected cumulative discounted reward (refer to the definition in (\ref{eqn:RL-objective})). 
To do so, we derive the following lemma.
\begin{lemma}\label{lemma:gradient_relational}
    The gradient of the objective
    $J(\Pi)$ w.r.t. $\boldsymbol{\theta}$ is proportional to the following estimator: \begin{align*}\label{eq:actor_gradient} \nabla_{\boldsymbol{\theta}}J(\Pi)\propto
    \mathbb{E}\bigg[\sum_{n'\in\mathcal{I}_t}\nabla_{\boldsymbol{\theta}}\log\Pi(a_{t,n'}|\textbf{o}_{t,n'};\boldsymbol{\theta})\big(\big( Q_\Pi(\textbf{s}_{t},\textbf{a}_t) - \frac{1}{|\mathcal{I}_t|}\sum_{n\in\mathcal{I}_t}\hat{V}_\Pi(\textbf{o}_{t,n};\mathbf{w})\big)\bigg],
    \end{align*}
    where $\mathcal{I}_t$ represents the set of agents that were active at timestep $t$.
\end{lemma}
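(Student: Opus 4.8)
The plan is to reduce the multi-agent gradient to the classical single-agent policy gradient theorem by exploiting the fully cooperative structure, and then to insert the average-value baseline. Because the setting is common-reward cooperative (as noted in the MARL preliminaries, where $R_1=\cdots=R_N=R$ makes the system act as a single decision maker), I would treat the joint system as one ``super-agent'' whose state is the joint network state $\mathbf{s}_t$, whose action is the joint action $\mathbf{a}_t$, and whose policy is the shared joint policy $\Pi(\cdot\,;\boldsymbol{\theta})$. Viewing the problem this way, I would first invoke the standard policy gradient theorem for $J(\Pi)=\mathbb{E}_{\Pi}[\sum_t\gamma^{t-1}R_{t+1}\mid\mathbf{s}_1]$, obtaining
\[
\nabla_{\boldsymbol{\theta}}J(\Pi)\propto\mathbb{E}\big[\nabla_{\boldsymbol{\theta}}\log\Pi(\mathbf{a}_t\mid\mathbf{s}_t;\boldsymbol{\theta})\,Q_\Pi(\mathbf{s}_t,\mathbf{a}_t)\big],
\]
where the constant of proportionality absorbs the discounted state-visitation weighting, which is exactly why the statement is phrased up to proportionality.

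Second, I would use the factorization of the joint policy. Since at time $t$ each active agent $n'\in\mathcal{I}_t$ selects its action independently from its own observation $\mathbf{o}_{t,n'}$ using the shared parameters $\boldsymbol{\theta}$, the joint policy factors as $\Pi(\mathbf{a}_t\mid\mathbf{s}_t;\boldsymbol{\theta})=\prod_{n'\in\mathcal{I}_t}\Pi(a_{t,n'}\mid\mathbf{o}_{t,n'};\boldsymbol{\theta})$. Taking logarithms turns the product into a sum, so $\nabla_{\boldsymbol{\theta}}\log\Pi(\mathbf{a}_t\mid\mathbf{s}_t;\boldsymbol{\theta})=\sum_{n'\in\mathcal{I}_t}\nabla_{\boldsymbol{\theta}}\log\Pi(a_{t,n'}\mid\mathbf{o}_{t,n'};\boldsymbol{\theta})$. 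Substituting this into the previous display yields the claimed sum over active agents, each term multiplied by the common factor $Q_\Pi(\mathbf{s}_t,\mathbf{a}_t)$.

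Third, I would subtract the average-value baseline. The critic term $b(\mathbf{s}_t)\triangleq\tfrac{1}{|\mathcal{I}_t|}\sum_{n\in\mathcal{I}_t}\hat{V}_\Pi(\mathbf{o}_{t,n};\mathbf{w})$ depends only on the agents' observations, hence only on $\mathbf{s}_t$, and not on the joint action. The key identity is that each agent's score function has zero conditional mean, $\mathbb{E}[\nabla_{\boldsymbol{\theta}}\log\Pi(a_{t,n'}\mid\mathbf{o}_{t,n'};\boldsymbol{\theta})\mid\mathbf{s}_t]=\nabla_{\boldsymbol{\theta}}\sum_{a}\Pi(a\mid\mathbf{o}_{t,n'};\boldsymbol{\theta})=\nabla_{\boldsymbol{\theta}}1=0$. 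Conditioning on $\mathbf{s}_t$ and pulling $b(\mathbf{s}_t)$ outside the inner action-expectation then shows $\mathbb{E}[\sum_{n'\in\mathcal{I}_t}\nabla_{\boldsymbol{\theta}}\log\Pi(a_{t,n'}\mid\mathbf{o}_{t,n'};\boldsymbol{\theta})\,b(\mathbf{s}_t)]=0$, so subtracting it leaves the gradient unchanged and produces exactly the stated estimator.

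The main obstacle I anticipate is the rigorous justification of the first step under partial observability: the classical policy gradient theorem presumes actions conditioned on the Markov state, whereas here the policy acts on observations $\mathbf{o}_{t,n}$ rather than on $\mathbf{s}_t$ directly. I would handle this by folding the (fixed) observation map into the environment, so that the induced joint action kernel given $\mathbf{s}_t$ is well defined and the theorem applies to the super-agent; the factorization in the second step then recovers the per-agent observation-based scores. A secondary check is that the baseline identity survives the summation over the random, state-dependent index set $\mathcal{I}_t$, which is resolved by conditioning on $\mathbf{s}_t$ (which determines $\mathcal{I}_t$) before taking the action expectation.
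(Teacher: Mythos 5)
Your proposal is correct and follows essentially the same route as the paper's proof: the paper simply derives the policy-gradient identity from scratch via the trajectory likelihood-ratio (writing $\Pr(\tau(T))$ as a product in which only the per-agent factors $\Pi(a_{t,n}|\textbf{o}_{t,n};\boldsymbol{\theta})$ depend on $\boldsymbol{\theta}$, so the score is exactly your sum over active agents), then passes to a single time step via the stationary-distribution proportionality and subtracts the observation-dependent baseline by the same zero-mean score argument you give. The one point worth tightening is that the baseline and the conditioning should be taken with respect to $(\textbf{s}_t,\textbf{o}_t)$ rather than $\textbf{s}_t$ alone, since $\hat{V}_\Pi(\textbf{o}_{t,n};\mathbf{w})$ is a function of the (possibly stochastic) observations — which is precisely how the paper writes its baseline $b(\textbf{o}_t)$.
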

\begin{proof}
The full proof is attached in Appendix \ref{appendix:lemma1}.
\end{proof}

According to its definition, the action-value function $Q_\Pi(\textbf{s}_t,\textbf{a}_t)$ represents the average delay of all packets while the network is at state $\textbf{s}_t$, uses action $\textbf{a}_t$, and follows the policy $\Pi$. Our proposal is to estimate the average network delay $\hat{Q}_\Pi(\textbf{s}_{t},\textbf{a}_t)$ using our active agents in order to accommodate the partially observed issue, i.e., 
\[\hat{Q}_\Pi(\textbf{s}_{t},\textbf{a}_t)=\frac{1}{|\mathcal{I}_t|}\sum_{n\in\mathcal{I}_t}\hat{Q}_\Pi(\textbf{o}_{t,n},a_{t,n})=\frac{1}{|\mathcal{I}_t|}\sum_{n\in\mathcal{I}_t}D_{n,t+1}+\gamma\cdot \hat{V}_\Pi(\textbf{o}_{t+1,n};\textbf{w})   
.\]
Following that, we can estimate the gradient of the objective $J(\Pi)$ w.r.t. $\boldsymbol{\theta}$ at the $t^{th}$ time-step using the following sample: \begin{equation}\label{eq:actor_gradient} \nabla_{\boldsymbol{\theta}}\hat{J}(\Pi)=\sum_{i\in\mathcal{I}_t}\bigg(\frac{1}{|\mathcal{I}_t|}\sum_{n\in\mathcal{I}_t}D_{n,t+1}+\gamma\cdot \hat{V}_\Pi(\textbf{o}_{t+1,n};\textbf{w}) - \hat{V}_{\Pi}\big(\textbf{o}_{t,n};\mathbf{w}\big)\bigg)\cdot\nabla_{\boldsymbol{\theta}}\text{log}\Pi(a_{t,i}|\textbf{o}_{t,i};\boldsymbol{\theta}).
\end{equation}
This estimation allows us to apply the stochastic gradient ascent method in order to find the local optimal solution. That is, at each time step t, the parameter $\boldsymbol{\theta}$ is updated by
\[
\boldsymbol{\theta} \leftarrow\boldsymbol{\theta} + \eta \nabla_{\boldsymbol{\theta}}\hat{J}(\Pi),
\]
for some learning rate $\eta\in(0,1)$.
By using this method, we expect to increase the cooperation between the different agents through the optimization of the joint objective (Fig. \ref{fig:Illustration_relational} illustrates this procedure). The steps of the proposed Relational A2C algorithm are summarized in Algorithm \ref{alg:rel_a2c} below:
\begin{algorithm}[H]
\caption{The Relational A2C Algorithm for Simultaneously Optimize Routing Strategy}\label{alg:rel_a2c}
\begin{algorithmic}[1]

\State Initialize Actor and Critic weights $\boldsymbol{\theta},\mathbf{w}$ and learning rates $\eta,\alpha$.

\For{time step $t = 1,2, \dots , T$}
    \State Extract active agents $\mathcal{I}_{t}$.
    \For{Agent $n \in \mathcal{I}_t$}
    \State Observe $\textbf{o}_{t,n}$.
    \State $\mathbf{a}_{n,t} = 
     a, \text{w.p. } \Pi(a|\textbf{o}_{t,n};\boldsymbol{\theta}), \forall a\in\mathcal{A}_{n,t},$
    \EndFor
\State Execute actions.
\For{Agent $n \in \mathcal{I}_t$}
\State Obtain the delay $D_{n,t+1}$ and next observation $\textbf{o}_{t+1,n}$ associated with the $n^{th}$ agent.
            \State Set Temporal Difference Error.
            \State
            $\delta_{t,n} = D_{n,t+1} + \gamma \cdot \hat{V}(\textbf{o}_{t+1,n};\mathbf{w}) - \hat{V}\big(\textbf{o}_{t,n};\mathbf{w}\big)$.
\EndFor
\State Update critic and actor parameters based on Eq. \ref{eq:critic_gradient} and Eq. \ref{eq:actor_gradient}.
\State $\boldsymbol{\theta}_{t+1} \leftarrow \boldsymbol{\theta}_t + \eta \cdot \bigg(\sum_{n\in\mathcal{I}_t}\nabla_{\boldsymbol{\theta}_t} \text{log}\big(\Pi(a_{t,n}|\textbf{o}_{t,n};\boldsymbol{\theta})\big)\cdot\big(\frac{1}{|\mathcal{I}_t|}\sum_{n'\in\mathcal{I}_t}\delta_{t,n'}\big)\bigg).$
\State  $\boldsymbol{w}_{t+1} \leftarrow \boldsymbol{w}_t - \alpha \cdot\nabla_{\boldsymbol{w}}\bigg(\sum_{n\in\mathcal{I}_t}(\delta_{t,n})^2\bigg).$
\EndFor
\end{algorithmic}
\end{algorithm}

\subsection{Dec-Relational A2C}
In this section we extend our method to support a decentralized training paradigm, namely, \textit{Dec-Relational A2C}. For decentralized training, we propose decoupling the neural network across the different base stations so that each base station uses its own set of weights for both actor and critic, as depicted in Fig \ref{fig:actor_critic_network_architecture}. With a dedicated neural network for each base station, the base station ID in the observation becomes redundant, so we drop it, leading to a new observation $\textbf{o}_n = [t,H(destination(n),\mathcal{L}),QueueDelay(n))]$, where $QueueDelay$ represents the amount of time the $n^{th}$ agent waited at the current base station queue. Let $\hat{\Pi}_k(\cdot;\boldsymbol{\theta}_k), \hat{V}_{\Pi}(\cdot;\textbf{w}_k), \boldsymbol{\theta}_k \in \mathbb{R}^{L_1}, \textbf{w}_k \in \mathbb{R}^{L_2},$ be our actor and critic representations at the $k^{th}$ base-station, respectively. 
In addition, let $k,n,k',\textbf{o}_n,\textbf{o}_n'$ be our current node, agent, next node decision, observation and next step observation, respectively. Following that, based on the Bellman equation we propose to modify the temporal difference estimation as follows:
\begin{equation}
    \delta_{k,n,k'} = D_n + \gamma \cdot \hat{V}_\Pi(\textbf{o}_n';\textbf{w}_{k'}) - \hat{V}_\Pi(\textbf{o}_n;\textbf{w}_{k}),
\end{equation}
such that there is a mutual update between the different base stations through the ACK signal. Essentially, by using this modification we hope that the next base-station estimation combined with the instant delay will represent the estimated path delay from the current base station for this agent.
Next, let $\mathcal{I}_k$ represent the set of agents who took a next-hop decision from base station $k$. The critic is then updated at each base station $k$ by minimizing the following objective $\mathcal{T}_k$ w.r.t. $\mathbf{w}_k$:
\begin{equation} \label{eq:dec_critic_gradient}
    \mathcal{T}_k(\hat{\Pi}_k) =\mathbb{E}\bigg[\sum_{n\in\mathcal{I}_k}\delta_{k,n,k'_n}^2\bigg|\hat{\Pi}_k\bigg],
\end{equation}
where we update critic's parameters using the gradient descent method:
\[
\textbf{w}_k \leftarrow \textbf{w}_k -\alpha\cdot\nabla_{\mathbf{w}_k}\mathcal{T}_k(\hat{\Pi}_k),
\]
for some learning rate $\alpha\in(0,1)$. As a next step, we aim to approximate the objective gradient w.r.t. $\boldsymbol{\theta}_k$.
\begin{lemma} \label{lemma:dec_gradient_approx}
    The gradient of the objective
    $J(\Pi)$ w.r.t. $\boldsymbol{\theta}_k$ is proportional to the following estimator: \begin{align*}\label{eq:actor_gradient} \nabla_{\boldsymbol{\theta}_k}J(\Pi)\propto
    \mathbb{E}\bigg[\sum_{n\in\mathcal{I}_{k,t}}\nabla_{\boldsymbol{\theta}_k}\log\Pi(a_{t,n}|\textbf{o}_{t,n};\boldsymbol{\theta}_k)\big( Q_\Pi(\textbf{s}_t,\textbf{a}_t) - \frac{1}{|\mathcal{I}_{k,t}|}\sum_{n'\in\mathcal{I}_{k,t}}\hat{V}_\Pi(\textbf{o}_{t,n};\mathbf{w}_k)\big)\bigg],
    \end{align*}
    where $\mathcal{I}_{k,t}$ represents the set of agents that were active at the $k^{th}$ base station at timestep $t$.
\end{lemma}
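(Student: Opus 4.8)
The plan is to follow the derivation of Lemma~\ref{lemma:gradient_relational} almost verbatim, changing only that the gradient is now taken with respect to the \emph{local} parameter vector $\boldsymbol{\theta}_k$ rather than the shared vector $\boldsymbol{\theta}$. The starting point is the joint policy gradient underlying Lemma~\ref{lemma:gradient_relational} (Appendix~\ref{appendix:lemma1}), namely that $\nabla_{\boldsymbol{\theta}_k}J(\Pi)$ is proportional to $\mathbb{E}\big[\nabla_{\boldsymbol{\theta}_k}\log\Pi(\textbf{a}_t|\textbf{o}_t;\boldsymbol{\theta})\, Q_\Pi(\textbf{s}_t,\textbf{a}_t)\big]$, where the joint policy factorizes over the active agents. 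Writing $k(n)$ for the base station hosting agent $n$, this factorization reads $\Pi(\textbf{a}_t|\textbf{o}_t;\boldsymbol{\theta})=\prod_{n\in\mathcal{I}_t}\hat{\Pi}_{k(n)}(a_{t,n}|\textbf{o}_{t,n};\boldsymbol{\theta}_{k(n)})$, which is the key structural fact I would exploit.

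First I would expand the joint score function using this factorization. Because $\boldsymbol{\theta}_k$ parameterizes only the policy of base station $k$, every factor attached to an agent residing at a different base station is constant in $\boldsymbol{\theta}_k$, so its gradient vanishes. Consequently the sum over all active agents collapses to a sum over $\mathcal{I}_{k,t}$, the agents that took a decision at base station $k$ at time $t$:
\[
\nabla_{\boldsymbol{\theta}_k}\log\Pi(\textbf{a}_t|\textbf{o}_t;\boldsymbol{\theta}) = \sum_{n\in\mathcal{I}_{k,t}}\nabla_{\boldsymbol{\theta}_k}\log\Pi(a_{t,n}|\textbf{o}_{t,n};\boldsymbol{\theta}_k).
\]
Substituting this back produces exactly the claimed sum restricted to $\mathcal{I}_{k,t}$, still weighted by the global action-value $Q_\Pi(\textbf{s}_t,\textbf{a}_t)$.

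The remaining step is to insert the per-base-station baseline $b_k\triangleq\frac{1}{|\mathcal{I}_{k,t}|}\sum_{n'\in\mathcal{I}_{k,t}}\hat{V}_\Pi(\textbf{o}_{t,n'};\mathbf{w}_k)$ and to verify it does not bias the estimator, i.e. that $\mathbb{E}\big[\sum_{n\in\mathcal{I}_{k,t}}\nabla_{\boldsymbol{\theta}_k}\log\Pi(a_{t,n}|\textbf{o}_{t,n};\boldsymbol{\theta}_k)\,b_k\big]=0$. I would condition on the full observation profile $\textbf{o}_t$ at time $t$. Since each estimate $\hat{V}_\Pi(\textbf{o}_{t,n'};\mathbf{w}_k)$ is a deterministic function of the observations and does not depend on the sampled actions, $b_k$ is measurable with respect to $\textbf{o}_t$ and factors out of the conditional expectation over actions. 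The score-function identity $\mathbb{E}_{a_{t,n}}[\nabla_{\boldsymbol{\theta}_k}\log\Pi(a_{t,n}|\textbf{o}_{t,n};\boldsymbol{\theta}_k)\mid \textbf{o}_{t,n}]=\nabla_{\boldsymbol{\theta}_k}\sum_{a}\Pi(a|\textbf{o}_{t,n};\boldsymbol{\theta}_k)=0$ then forces each baseline term to vanish, so subtracting $b_k$ preserves the proportionality and yields precisely the advantage-weighted estimator in the statement.

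The main obstacle I anticipate is this baseline-subtraction step in the multi-agent, partially observed setting: unlike the single-agent case, the single aggregate baseline $b_k$ multiplies the score functions of several distinct agents at once, so one must check the \emph{cross terms} in which agent $n$'s score is paired with the value estimate $\hat{V}_\Pi(\textbf{o}_{t,n'};\mathbf{w}_k)$ of a different agent $n'\neq n$. These also average to zero because each such estimate is observation-measurable and the agents' actions are sampled independently given the observations; conditioning on $\textbf{o}_t$ therefore renders $b_k$ constant with respect to every individual action $a_{t,n}$, and the score-function identity applies term by term. Stating this conditional-independence property explicitly is what I would use to close the argument, after which the result follows by the same reasoning as Lemma~\ref{lemma:gradient_relational}.
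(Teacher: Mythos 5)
Your proposal follows essentially the same route as the paper's own proof: factorize the trajectory probability over the per-agent policy factors, observe that differentiating with respect to $\boldsymbol{\theta}_k$ annihilates every factor belonging to an agent not deciding at base station $k$ (so the sum collapses to $\mathcal{I}_{k,t}$, the paper's step $(d)$), identify the remaining weight as $Q_\Pi(\textbf{s}_t,\textbf{a}_t)$ via the law of total expectation, and subtract the observation-measurable baseline without introducing bias (the paper's step $(i)$). Your explicit check of the cross terms in which agent $n$'s score multiplies agent $n'$'s value estimate is a slightly more careful treatment of a point the paper disposes of by citing the standard baseline argument, but the proof is the same.
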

\begin{proof}
    The full proof is attached in Appendix \ref{appendix:lemma2}.
\end{proof}

To support decentralized training, we first proposed a local estimation of average network delay, $\hat{Q}^{(k)}_\Pi(\textbf{s}_{t},\textbf{a}_t) =\frac{1}{|\mathcal{I}_{k,t}|}\sum_{n\in\mathcal{I}_{k,t}} D_{n,t+1} +\gamma\cdot\hat{V}_\Pi(\textbf{o}_{t+1,n};\textbf{w}_{a_{t,n}})$. 
Using this estimation and following Lemma \ref{lemma:dec_gradient_approx},  we can estimate the objective $J(\Pi)$ gradient w.r.t. $\boldsymbol{\theta}_k$ using the following sample:
\begin{equation} \label{eq:dec_actor_gradient}
\nabla_{\boldsymbol{\theta}_k}\hat{J}(\hat{\Pi}_k)=
\sum_{n\in\mathcal{I}_{k}}\nabla_{\boldsymbol{\theta}_n} \text{log}\big(\hat{\Pi}_k(a_{n}|\textbf{o}_{n};\boldsymbol{\theta}_k)\big)\big(\frac{1}{|\mathcal{I}_k|}\sum_{n'\in\mathcal{I}_{k}}\delta_{k,n',a_{n'}}\big),
\end{equation}

where we neglect the time indexing for notational simplicity. Afterwards, we update our agents policies using the gradient ascent method: 
\[
\boldsymbol{\theta}_k \leftarrow\boldsymbol{\theta}_k + \eta \nabla_{\boldsymbol{\theta}_k}\hat{J}(\hat{\Pi}_k),
\]
for some learning rate $\eta\in(0,1)$. We term this method as \textit{Dec-Relational A2C}. By following this method we are able achieve fully decentralized training of our network (Fig. \ref{fig:Illustration_dec_relational} illustrates this procedure). 
The steps of this algorithm are summarized in Algorithm \ref{alg:dec_rel_a2c} below:
\begin{algorithm}[H]
\caption{The Dec-Relational A2C Algorithm for Simultaneously Optimize Routing Strategy}\label{alg:dec_rel_a2c}
\begin{algorithmic}[1]
\State Initialize Learning Rates $\eta,\alpha$.
\For{Base Station $k = 0,1,\dots,K-1$}
\State Initialize Actor and Critic weights $\boldsymbol{\theta}_k,\mathbf{w}_k$.
\EndFor
\For{time step $t = 1,2,...,T$}
    \For{Base Station $k = 0,1,\dots,K-1$}
        \State Extract active agents $\mathcal{I}_{k,t}$.
        \For{Agent $n \in \mathcal{I}_{k,t}$}
        \State Observe $\textbf{o}_{t,n}$.
        \State $a_{t,n} = 
         a, \text{w.p. } \hat{\Pi}_k(a|\textbf{o}_{t,n};\boldsymbol{\theta}_k), \forall a\in\mathcal{A}_{n,t},$
        \EndFor
    \EndFor
\State Execute actions.
\For{Base Station $k = 0,1,\dots,K-1$}
\For{Agent $n \in \mathcal{I}_{k,t}$}
\State Obtain the delay $D_{n,t+1}$ and next observation $\textbf{o}_{t+1,n}$ associated with the $n^{th}$ agent.
            \State Set Temporal Difference Error.
            \State
            $\delta_{t,n} = D_{n,t+1} + \gamma \cdot \hat{V}(\textbf{o}_{t+1,n};\mathbf{w}_{a_{t,n}}) - \hat{V}\big(\textbf{o}_{t,n};\mathbf{w}_k\big)$.
\EndFor
\EndFor
\State Update critics' and actors' parameters based on Eq. \ref{eq:dec_critic_gradient} and Eq. \ref{eq:dec_actor_gradient}.
\For{Base Station $k = 0,1,\dots,K-1$}
\State $\boldsymbol{\theta}_{t+1,k} \leftarrow \boldsymbol{\theta}_{t,k} + \eta \cdot \bigg(\sum_{n\in\mathcal{I}_{k,t}}\nabla_{\boldsymbol{\theta}_k} \text{log}\big(\hat{\Pi}_k(a_{t,n}|\textbf{o}_{t,n};\boldsymbol{\theta}_{t,k})\big)\cdot\big(\frac{1}{|\mathcal{I}_{k,t}|}\sum_{n'\in\mathcal{I}_{k,t}}\delta_{t,n'}\big)\bigg).$
\State  $\boldsymbol{w}_{t+1,k} \leftarrow \boldsymbol{w}_{t,k} - \alpha \cdot\nabla_{\boldsymbol{w}_k}\bigg(\frac{1}{|\mathcal{I}_{k,t}|}\sum_{n\in\mathcal{I}_{k,t}}(\delta_{t,n})^2\bigg).$
\EndFor
\EndFor
\end{algorithmic}
\end{algorithm}

\subsection{Fed-Relational A2C}
To conclude, we propose another version that combines the features of the previous versions. As part of this approach, the weights of the network are updated using a federated learning approach \cite{fedAVGpaper}. Therefore, we refer to this method as \textit{Fed-Relational A2C}.
In this approach the agents constantly update their weights in a decentralized manner, similar to the decentralized approach, while the agents periodically report their weights to a central controller; based on the weights they submit, the controller calculates a new set of shared weights. The final step is to relay the updated weights to the base stations.
Thus, we find this method to be somewhere in the middle between the previous centralized and decentralized approaches. Additionally, in this approach, weights are shared periodically among the different base stations, resulting in the same observations as in the relational approach.
The steps of this algorithm are summarized in Algorithm \ref{alg:fed_rel_a2c} below:
\begin{algorithm}[H]
\caption{The Federated-Relational Advantage Actor Critic (Fed-Relational A2C) Algorithm for Simultaneously Optimize Routing Strategy}\label{alg:fed_rel_a2c}
\begin{algorithmic}[1]
\State Initialize Actor and Critic weights and learning rates $\boldsymbol{\theta},\mathbf{w}, \eta,\alpha$.
\State Initialize federated update period $\tau$.

\For{BS $k = 0,1,\dots,K-1$} \Comment{ Broadcast Actor and Critic weights.}
\State $\mathbf{w}_k \leftarrow \mathbf{w}$,     $\boldsymbol{\theta}_k \leftarrow \boldsymbol{\theta}$.
\EndFor
\For{time step $t = 1,2, \dots, T$}
\If{$t \text{ MOD } \tau == 0$} 
        \State Collect Actor and Critic weights from each base-station $\{\mathbf{w}_k\}_{k=0}^{K-1}, \{\boldsymbol{\theta}_k\}_{k=0}^{K-1}$ with the number of updates since last global update $\{n_k\}_{k=0}^{K-1}$.
        \State \Call{FederatedUpdate}{$\{\mathbf{w}_k\},\{ \boldsymbol{\theta}_k\}_{k=0}^{K-1}$,K,$\{n_k\}_{k=0}^{K-1}$}.
\EndIf 
\State Follow decision and training phases of Decentralized Algorithm \ref{alg:dec_rel_a2c}.\Comment{Lines 6-25.}
\EndFor
\Procedure{FederatedUpdate}{CriticWeights,ActorWeights,K,NumberOfUpdates}
\State $\{\textbf{w}_k\}_{k=0}^{K-1} \leftarrow \text{CriticWeights}$, \Comment{Unpack the Critic Weights}.
\State $\{\boldsymbol{\theta}_k\}_{k=0}^{K-1} \leftarrow \text{ActorWeights}$, \Comment{Unpack the Actor Weights}.
\State $\{n_k\}_{k=0}^{K-1} \leftarrow \text{NumberOfUpdates}$ \Comment{Unpack each BS's update number.}
\State $\mathbf{w}\leftarrow\sum_{k=0}^{K-1}\frac{n_k}{\sum_{k=0}^{K-1}n_k}\mathbf{w}_k$,  $\boldsymbol{\theta}\leftarrow\sum_{k=0}^{K-1}\frac{n_k}{\sum_{k=0}^{K-1}n_k}\boldsymbol{\theta}_k,$ \Comment{Apply FedAvg Update Rule.}
\For{BS $k = 0,1,\dots,K-1$} \Comment{Broadcast the new calculated weights to the agents}
    \State $\mathbf{w}_k \leftarrow \mathbf{w}$,  $\boldsymbol{\theta}_k \leftarrow \boldsymbol{\theta}$.
\EndFor
\EndProcedure
\end{algorithmic}
\end{algorithm}
\par
It should be mentioned that in a general POMDP setting solved using MARL techniques, as considered here, a common problem is multiple local optimum points \cite{hu2003nash} within the joint policy space, which may be resolved with convergence to a less desirable, local optima strategy solution \cite{hu2003nash}. As a result, convergence to the optimal policy is not guaranteed theoretically. In practice, however, it achieves very good performance even in various POMDP models with infinitely large state space. For example, the work in \cite{foerster2018counterfactual} developed an Actor Critic algorithm for teaching multiple agents how to play Starcraft games directly from screen images, and achieved very good performance at various stages.

\section{Experiments and Insights}
In this section, we describe our main research insights and their associated experiments. First, we evaluate several connectivity scenarios in order to demonstrate the importance of network routing in an IAB network. Next, we study the impact of network load as well as the impact of mobility. This analysis shows that high loads affect network performance significantly in the case of mobility for most of the routing algorithms. We then analyze how changes in online traffic patterns and node failure affect the network. As a result of this experiment, we gained valuable insights into how the proposed routing can adapt to and recover from online changes. Lastly, we analyzed the different routing convergence times 
\par
As part of these experiments, we study and evaluate various routing methods within an IAB network, as discussed in Section \RNum{5}. In particular, the performance of Relational A2C is compared to six other algorithms: Centralized Routing, Minimum-Hop Routing, Back-pressure Routing, Q-Routing, Full Echo Q-Routing, and Hybrid Routing. We refer the reader to Section \RNum{5} for a detailed explanation of each algorithm.
\par
To conduct those experiments, we have developed a gym-based simulated IAB environment \cite{brockman2016openai}. The simulation takes place over a 2-dimensional grid \footnote{For reproducing our results, we refer the reader to run our code, https://github.com/Shahaf-Yamin/Routing-In-IAB-Networks}. Tables \ref{tab:network_hyperparameters},\ref{tab:algorithm_hyperparameters} describe network and algorithms hyper parameters, respectively.
Furthermore, Relational A2C was implemented as described in Algorithms \ref{alg:rel_a2c},\ref{alg:dec_rel_a2c},\ref{alg:fed_rel_a2c} in Section \RNum{6}, with all methods based on the A2C network architecture shown in Fig \ref{fig:actor_critic_network_architecture}. In the following, all the metrics we mentioned in Section \RNum{4} are used as the figure-of-merit for evaluating the performance of the different algorithms.

\begin{table}[!h]
\begin{center}
 \begin{tabular}{| c | c | c | c | c | c | c | c | c | c | c |}
 \hline
 $|\mathcal{D}|$ & $|\mathcal{B}|$ & $|\mathcal{U}|$ & $N$ & $P^{max}_{parent}$ & $C^{max}_{children}$ & $U_{children}^{max}$ & $U_{parent}^{max}$ & $TTL$ & $\text{User's Speed } [\frac{m}{sec}]$ & $C$\\  
 \hline
  1 &  9 & 100 & 300000 & 3 & 3 & 35 & 2 & 50 & 3 & 1\\ 
 \hline
\end{tabular}
\caption{Simulation network hyper-parameters.}
\label{tab:network_hyperparameters}
\end{center}
\end{table}

\begin{table}[!h]
\begin{center}
 \begin{tabular}{| c | c | c | c | c | c |}
 \hline
 $\gamma$ & $\alpha$ & $\eta$ & $\epsilon_D$ & $\epsilon_{min}$ & $\text{Federated Update Frequency [Time-slots]}$ \\
 \hline
  0.995 &  0.0001 & 0.0001 & 0.9999 & 0.01 & 1000\\ 
 \hline
\end{tabular}
\caption{Algorithm's hyper-parameters.}
\label{tab:algorithm_hyperparameters}
\end{center}
\end{table}
\begin{figure}[H] 
  \centering
  \includegraphics[width=0.6\textwidth]{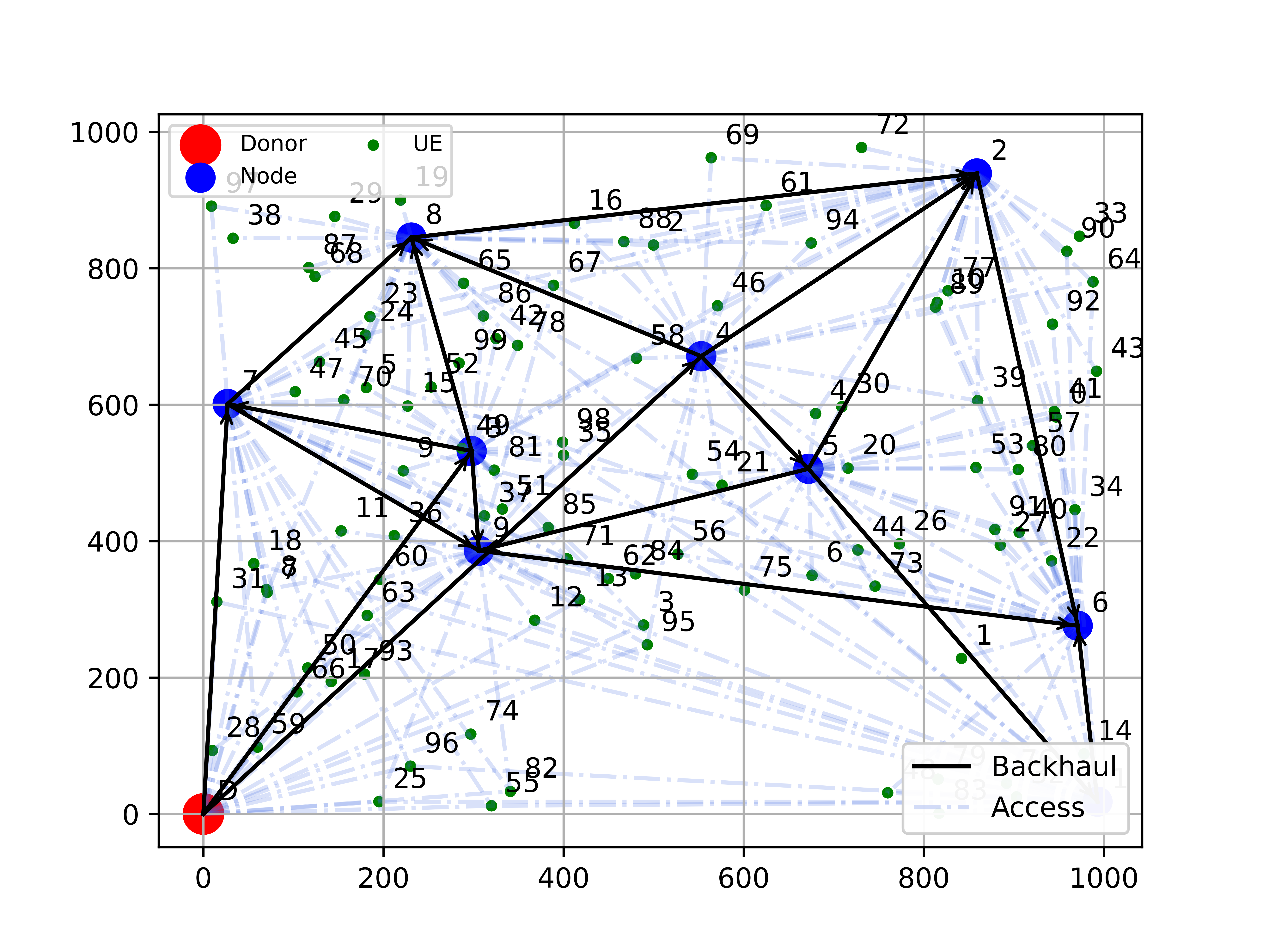}
  \caption{Network Topology Illustration}
  \label{fig:Illustration_topology}
\end{figure}

Figure \ref{fig:Illustration_topology}. illustrates a network composed with $1$ IAB Donor, $9$ IAB Nodes and $100$ users. 

\subsection{The Importance of Routing in an IAB network}
In this section, we examine the importance of routing in an IAB network by measuring network performance in various scenarios of connectivity. Our study emphasizes the importance of routing algorithms by illustrating that for high connectivity, where routing is needed, performance is much higher than for low connectivity, where routing is not needed due to the limited number of paths. In order to change the network's connectivity we have modified the constraints that dictate the number of parents each IAB node / User may have and the number of devices (IAB children / Users) that each IAB can support.

\begin{figure*}[ht!]
  \subfloat[Single parent topology (no routing). \label{fig:single_parent_topology}]{
      \includegraphics[width=0.31\textwidth]{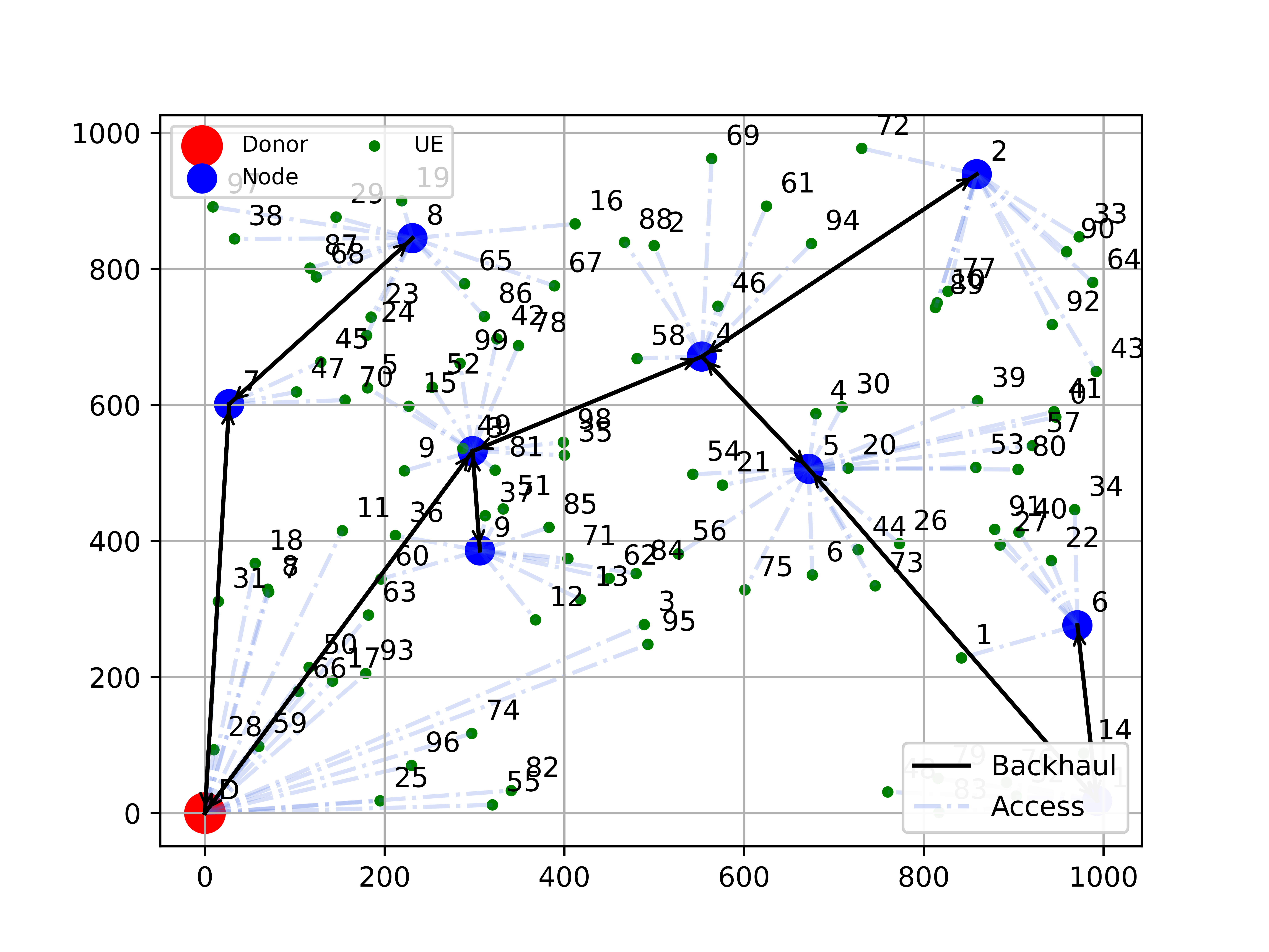}}
\hspace{\fill}
  \subfloat[3 Parents topology \label{fig:PKT} ]{
      \includegraphics[width=0.31\textwidth]{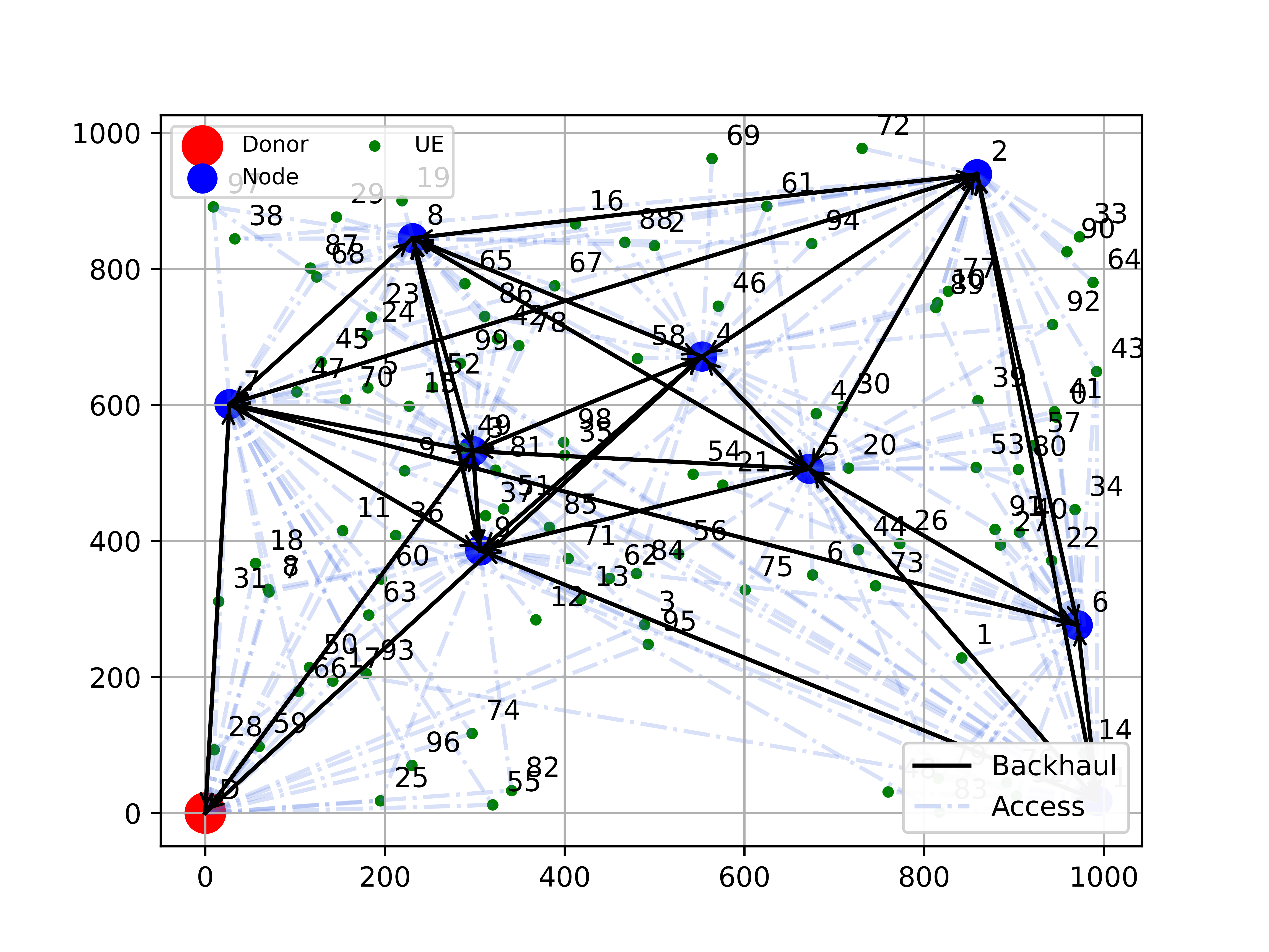}}
\hspace{\fill}
  \subfloat[Full-connectivity topology (mesh). \label{fig:tie5}]{
      \includegraphics[width=0.31\textwidth]{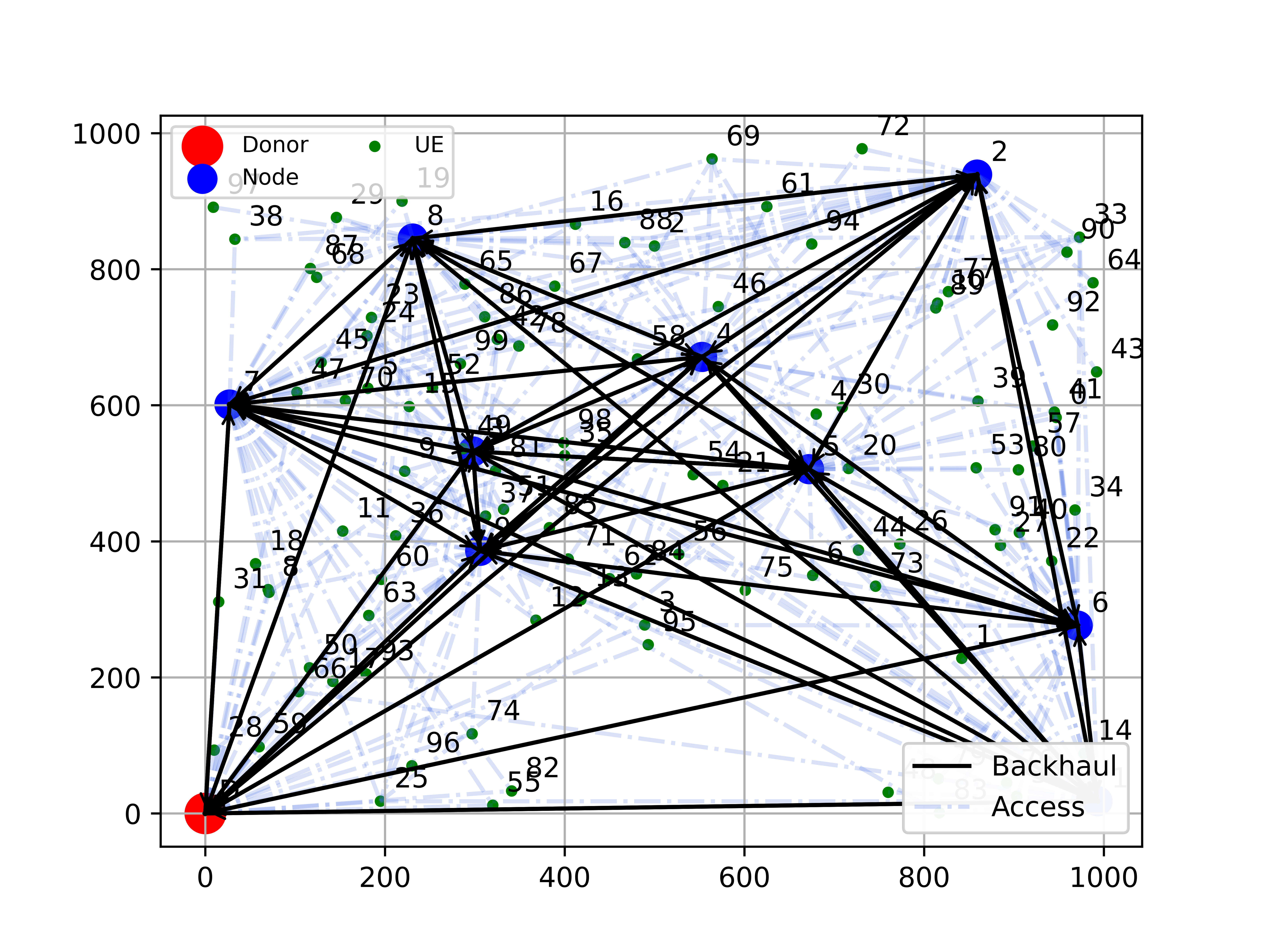}}\\
\caption{Visualization of different connectivity scenarios.}
    \label{fig:connectivity_visualization}
\end{figure*}

\begin{figure*}[ht!]
\hspace{\fill}
  \subfloat[Arrival ratio performance. \label{fig:PKT} ]{%
      \includegraphics[ width=0.49\textwidth]{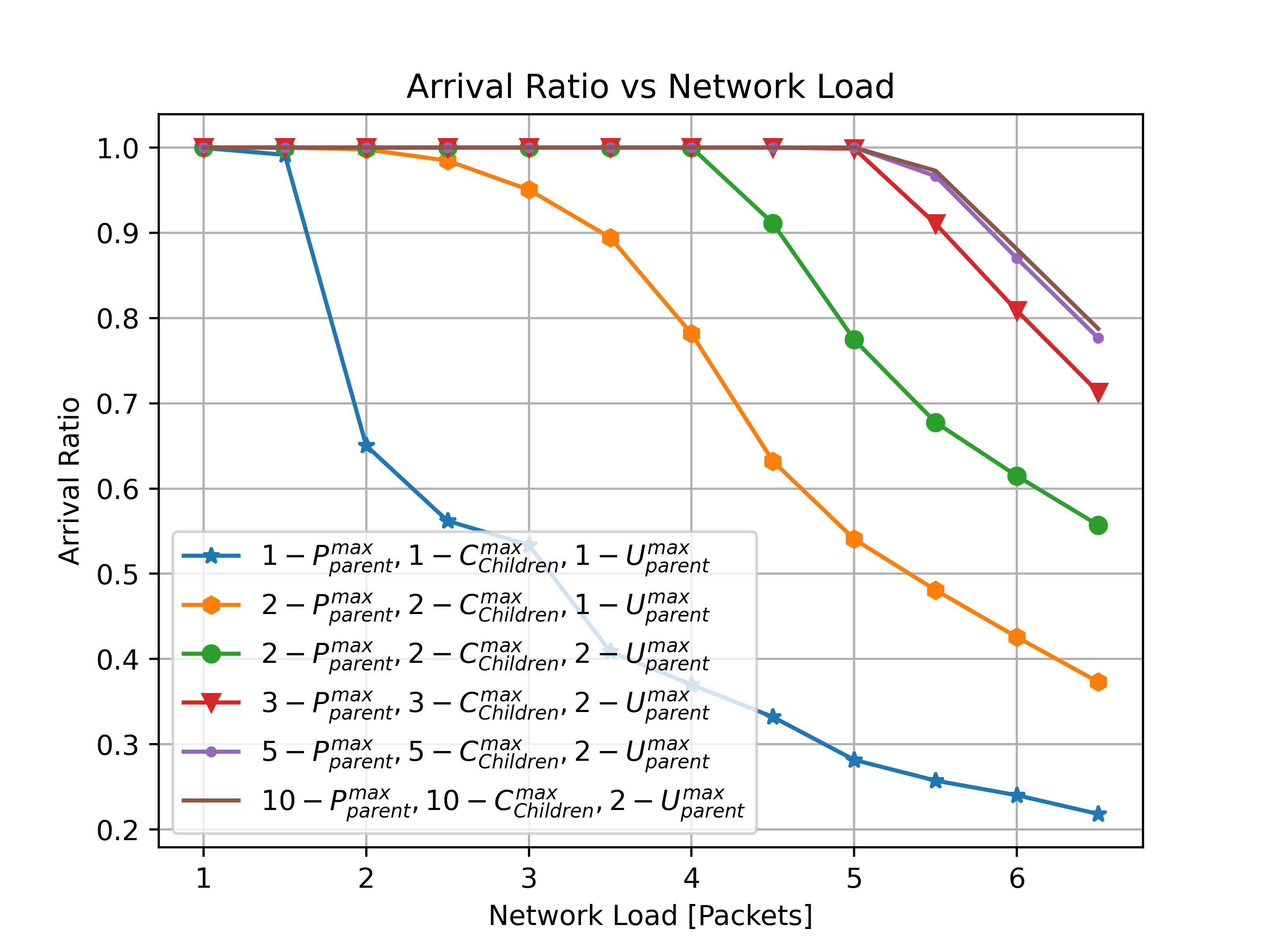}}
\hspace{\fill}
  \subfloat[Average delay time performance. \label{fig:tie5}]{%
      \includegraphics[ width=0.49\textwidth]{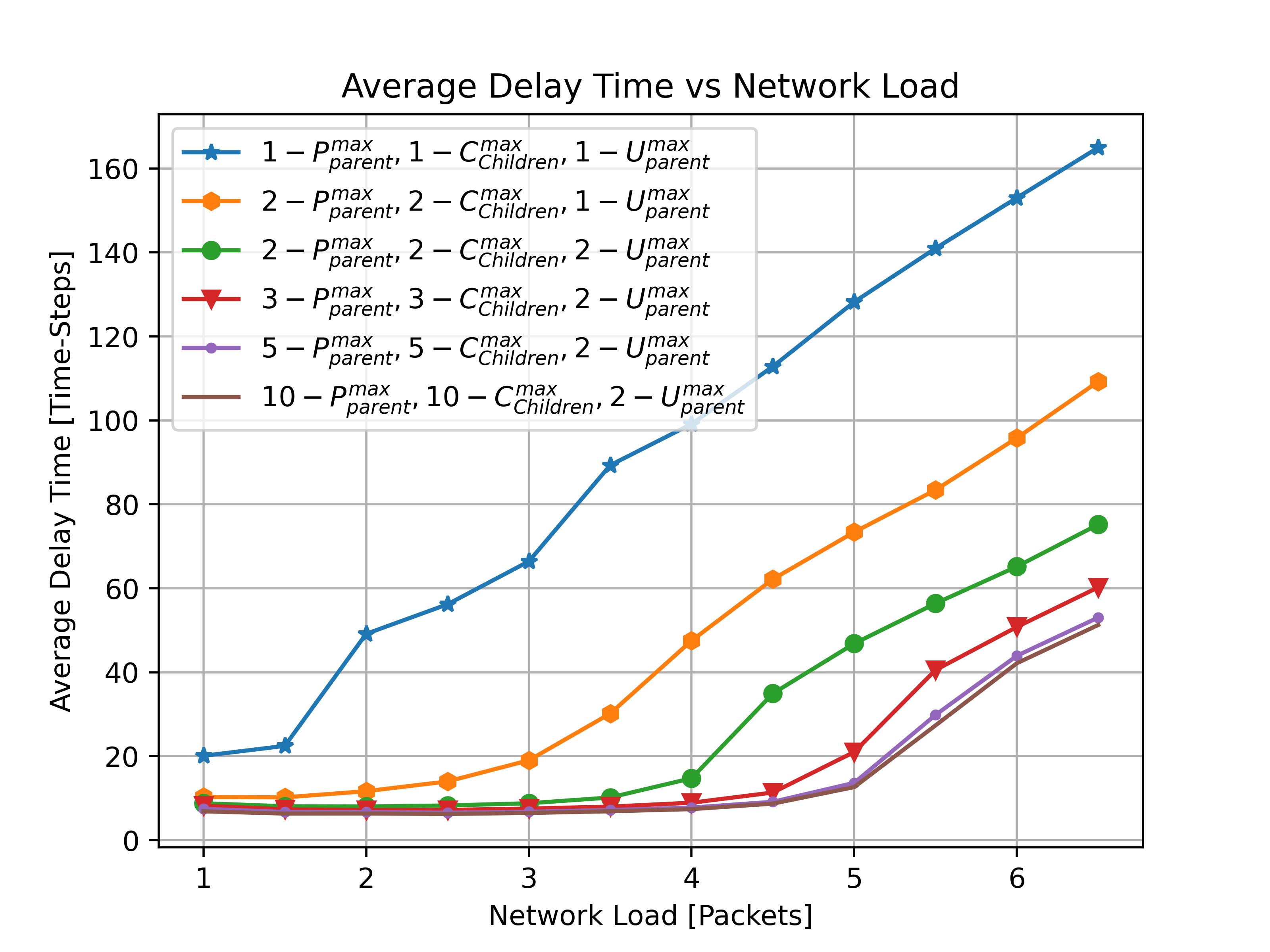}}\\
\caption{Performance illustration of different network's connectivity cases.}
    \label{fig:connectivity_influence}
\end{figure*}

Using the results illustrated in Fig. \ref{fig:connectivity_influence}, we can conclude that using a single path topology (Fig \ref{fig:connectivity_visualization}.a.) yielded the lowest arrival ratio and the longest delay. We can also deduce that higher network connectivity is highly recommended to support the expected 5G's requirements of high load and low latency. Moreover, while higher connectivity may have some benefits, we must also consider that a higher level of connectivity may result in more interference between nearby base stations, which may contradict our assumption regarding interference. In light of this conclusion, the purpose of this study is to examine how using routing can improve network performance, while only taking partial connectivity into account.

\subsection{The Influence of Network Load}
This section examines how network load affects the performance of different algorithms in various mobility scenarios. That is, we evaluate each algorithm's resilience under different loads for various mobility scenarios. Following this, we conclude that an increase in load / mobility affects network performance significantly for most of the proposed routing algorithms. As a next step, we will describe in detail our experiments and their corresponding empirical results.
\par
To change the network load, the parameter $\lambda$ of the Poisson distribution has been modified. This parameter indicates the average number of packets generated in each time-slot by the IABs. To modify $\lambda$, we scanned various loads successively from bottom-to-top, and then from top-to-bottom. The results presented are an average of 10 different measurements for each load across five different network topologies.
\par
First, we evaluate the performance of our algorithms over a static topology, which means that users cannot change their base-station association or location. Next, we increase the user's speed to $3_{\frac{m}{sec}}$, which means that users can change their location and change their association with a base station at any time. We then refer to this scenario as a dynamic topology and evaluate our algorithms' performance. In addition, we investigate how the speed of UEs affects the performance of the different algorithms. Furthermore, due to the lack of further information provided by the arrival ratio, we present only the average delay metric. The following sections provide results and insights from those experiments.

\begin{figure*}[ht!]
  \subfloat[Static Topology. \label{fig:load_analysis_average_delay_static_topology}]{%
      \includegraphics[ width=0.49\textwidth]{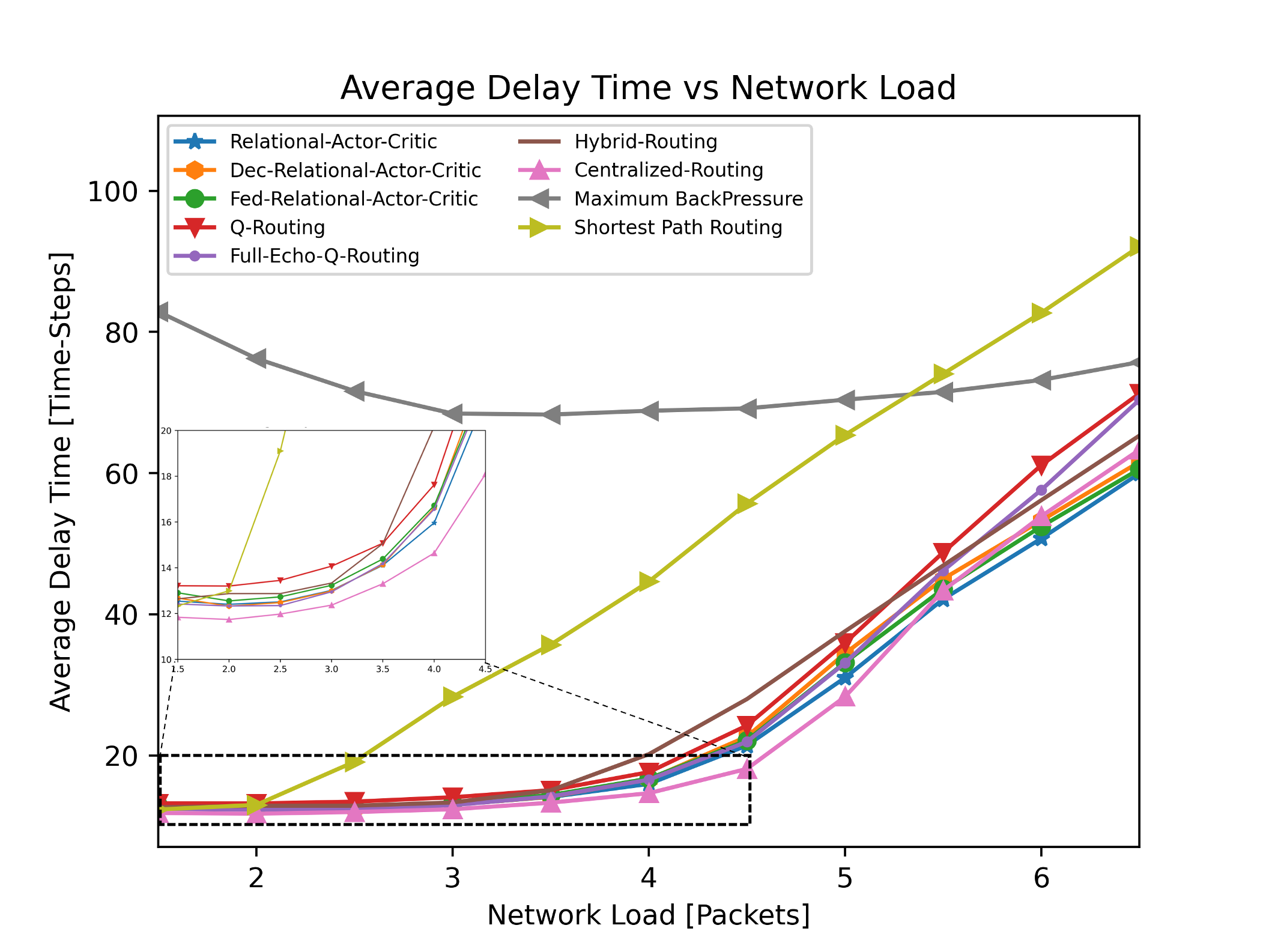}}
\hspace{\fill}
  \subfloat[Dynamic Topology. \label{fig:load_analysis_average_delay_dynamic_topology}]{%
      \includegraphics[ width=0.49\textwidth]{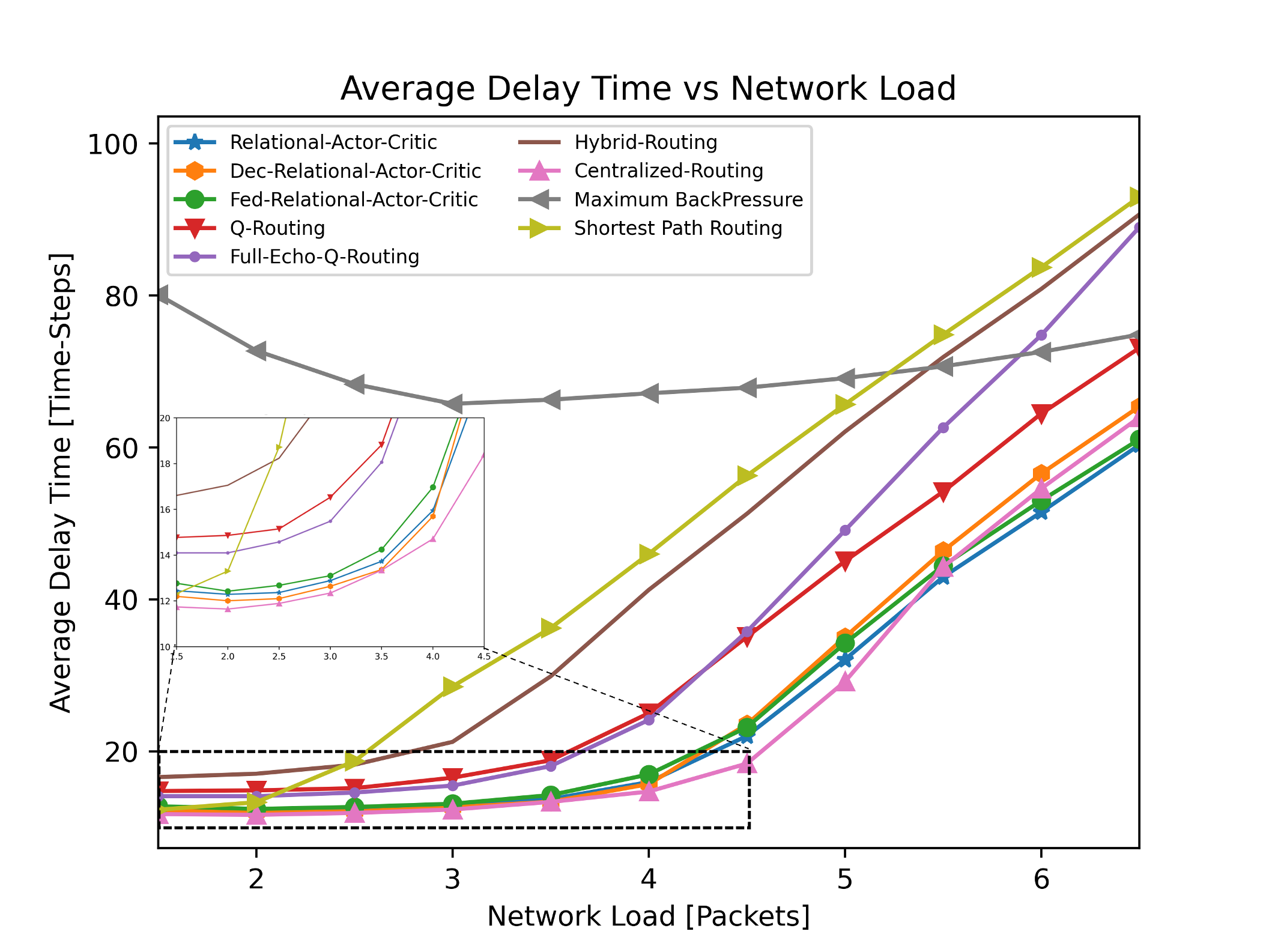}}\\
\caption{Static topology and dynamic topology - average delay performance for different routing algorithms under different loads.}
    \label{fig:load_sweep_average_delay}
\end{figure*}

\subsubsection{Static Topology}
This experiment evaluated the algorithm's performance under a static network topology and changing load. In Fig. \ref{fig:load_analysis_average_delay_static_topology}, we present the performance of the algorithms. Based on these results, we can conclude that all three versions of the Relational A2C algorithm outperformed other algorithms, despite acting in a decentralized manner. Moreover, these results also revealed that using exploration in conjunction with exploitation (Full-Echo Q-Routing) greatly improved performance at low to medium loads when compared with Q-Routing, whilst they achieved similar performances for higher loads. 
\par
As a further interesting phenomenon, we observe that Hybrid Routing improves performance when compared with traditional Q-Routing under low loads, but degrades performance under higher loads. We explain this phenomenon by stating that higher loads require better coordination among agents, as making an incorrect routing decision while the network is already congested will probably result in packet loss. Our conclusion is that since each agent learns an independent stochastic policy as part of the algorithm, it is less effective at handling non-stationary problems than traditional Q-Routing algorithms.
In our next experiment, we allow UEs to travel and change their base-station association, which will further increase the non-stationarity of the problem. 
\par
Additionally, as network load increases, we observed performance degradation of the Shortest-Path algorithm. Our explanation for this phenomenon is that queue delay does not have a significant influence on performance for low loads, but as the load increases, it becomes more significant. 

\subsubsection{Dynamic Topology}
This experiment evaluated the algorithm's performance under a dynamic network topology and changing load. Following this, based on the results illustrated in Fig. \ref{fig:load_sweep_average_delay} we can determine that although acting in a decentralized manner, all versions of Relational A2C's algorithm managed to achieve superior performance than the other algorithms for this scenario. Further, Hybrid Routing does not perform well in this non-stationary environment, as we observe further degradation in performance when compared to a static scenario, and at low load it is even less effective than Q-Routing.
\par
Moreover, when compared with the static scenario, Full-Echo Q-Routing exhibits performance degradation at medium to high loads. It performs worse in this range than the traditional Q-Routing. Our explanation for this phenomenon is that when using Full Echo Q-Routing, frequent topology changes may cause multiple modifications to the routing policy, which will result in increased instability and longer routing times as loads increase. In addition, this phenomenon coincides with the phenomena observed in \cite{boyan1994packet}, in which the authors applied those algorithms to a grid topology. 
\par
The significant performance gap between all versions of Relational A2C and the traditional hybrid algorithm supports our claim that by following our proposed algorithms, agents are able to cooperate more effectively and increase the stability of our routing policy, proving that agents that cooperate achieve better outcomes than agents who act selfishly. Thus, these results demonstrate that optimizing the joint goal and solving the coordination problem between agents significantly improves performance for both static and dynamic topologies. 
\par
When comparing the performance of static versus dynamic topologies as illustrated in Fig. \ref{fig:load_sweep_average_delay}, we observe that all tabular RL baselines suffer from performance degradation when users are permitted to move and change their base-station association. Based on these results, which indicate a significant difference between static and dynamic topologies, we propose the following experiment to examine the effect of UE movement on algorithm performance.

\subsubsection{Results from Examining the Influence of Dynamic Topology Changes}
In this experiment, we evaluated the algorithm's robustness to dynamic topology changes under constant loads. For this purpose, we scanned various UE speeds and measured the arrival ratio and average delay with the different algorithms. We evaluate the algorithms' performance for two possible loads, medium load ($\lambda=3$) and high load ($\lambda=5$), presented in Fig. \ref{fig:mobility_analysis_medium_load_average_delay} and Fig. \ref{fig:mobility_analysis_high_load_average_delay}, respectively.
\begin{figure*}[ht!]
  \subfloat[Average Delay For High Load $\lambda=5$. \label{fig:mobility_analysis_medium_load_average_delay}]{%
      \includegraphics[ width=0.49\textwidth]{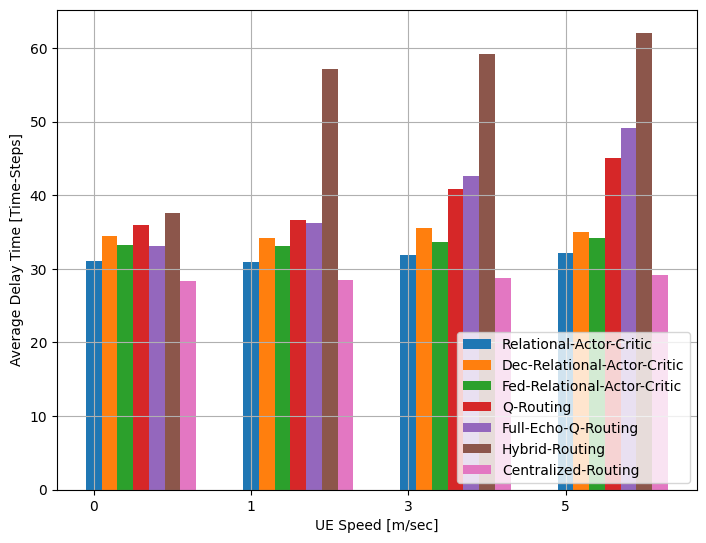}}
\hspace{\fill}
  \subfloat[Average Delay For Medium Load $\lambda=3$. \label{fig:mobility_analysis_high_load_average_delay}]{%
      \includegraphics[ width=0.49\textwidth]{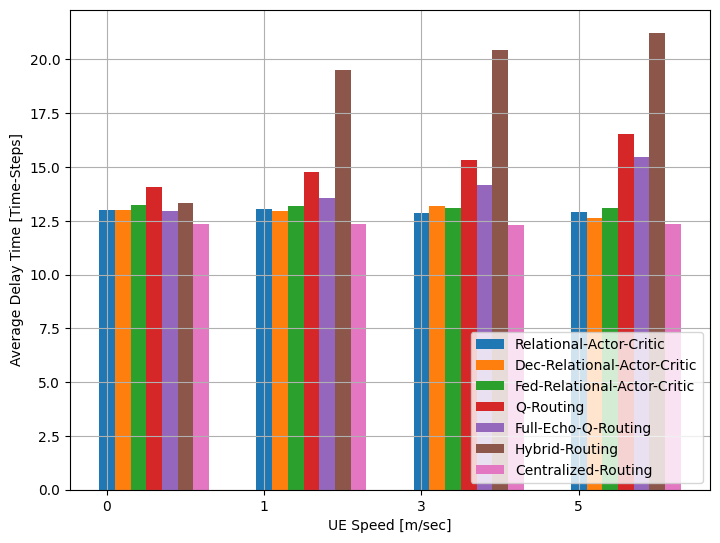}}\\
\caption{Average network delay for different routing algorithms under different UE speed.}
\label{fig:mobility_analysis_medium_load}
\end{figure*}
\par
Fig. \ref{fig:mobility_analysis_medium_load_average_delay} and Fig. \ref{fig:mobility_analysis_high_load_average_delay} demonstrate that increasing the UE speed will not have a significant impact on the Relational A2C algorithm's performance for varying network loads. Also, in medium loads, Full Echo Q-Routing achieves superior performance to Q-Routing, with both suffering from similar performance degradation as a result of their different UE speeds. Furthermore, we observe that the combination of higher load with increased speed results in more severe performance degradation for Full Echo Q-Routing compared to traditional Q-Routing. This observation is consistent with the results of our previous experiments. Moreover, Hybrid Routing shows the greatest degradation among all Q-Routing algorithms, further demonstrating that it is less able to cope with this non-stationary setting than traditional Q-Routing algorithms. Aside from this, we observe that all other RL-based algorithms suffer from performance degradation when UE speed is increased, which serves as an additional indication of our proposed approach's superiority.

\subsection{Experiment Results for Online Changes}
The purpose of this section is to examine how online changes impact algorithm performance in a variety of scenarios. Our study indicates that Relational A2C-based approaches are superior for handling such online changes when compared to other algorithms. As a next step, we will describe the online scenarios that we have studied. In the first case, we analyzed the algorithm's response to bursts of traffic. In the second case, we investigate how the algorithm responds to node failures and recovery situations. Also, in these experiments we capture those online changes by measuring the algorithm's performance using a sliding window with a length of 100 time slots.
\subsubsection{Experiment Results for Bursts of Traffic}
During this experiment we measure the algorithm's response to bursts of traffic. This was accomplished by evaluating the algorithm performance while changing the network load. Our experiment involves changing the network load rapidly from low ($\lambda=2$) to high ($\lambda=5$) and then back to low. The results presented here are an average of 10 measurements across five different topologies of the network. Furthermore, we conducted this experiment for both static and dynamic topologies, as decipted in Fig. \ref{fig:load_analysis_average_delay_static_topology} and Fig. \ref{fig:load_analysis_average_delay_dynamic_topology}, respectively. Further, due to the absence of further information offered by the arrival ratio, we present only the average delay metric measured through time for this experiement. 
\begin{figure*}[ht!]
  \subfloat[Average Delay For Static Topology. \label{fig:load_burst_average_delay_static_topology}]{%
      \includegraphics[ width=0.49\textwidth]{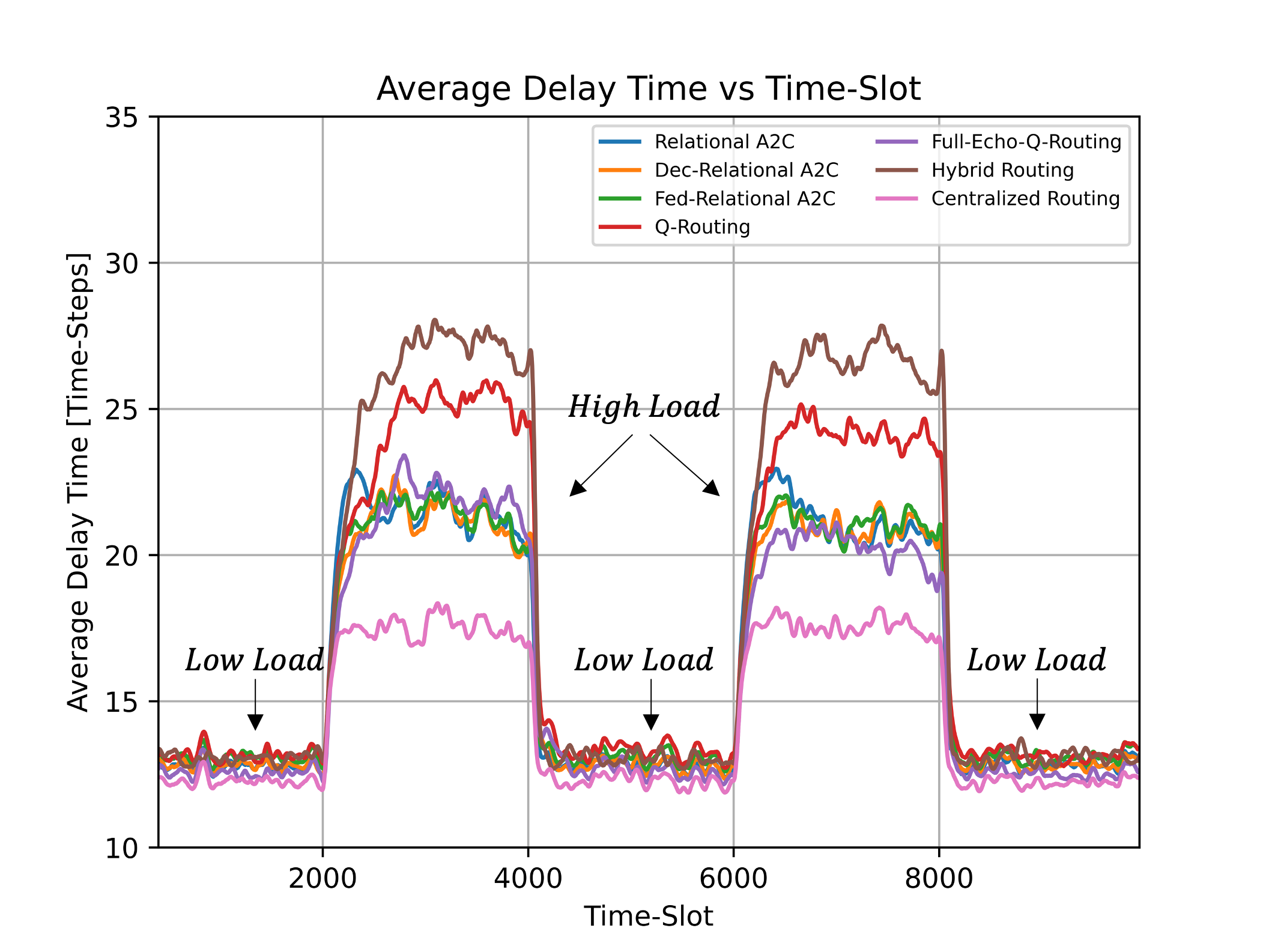}}
\hspace{\fill}
  \subfloat[Average Delay For Dynamic Topology. \label{fig:load_burst_average_delay_dynamic_topology}]{%
      \includegraphics[ width=0.49\textwidth]{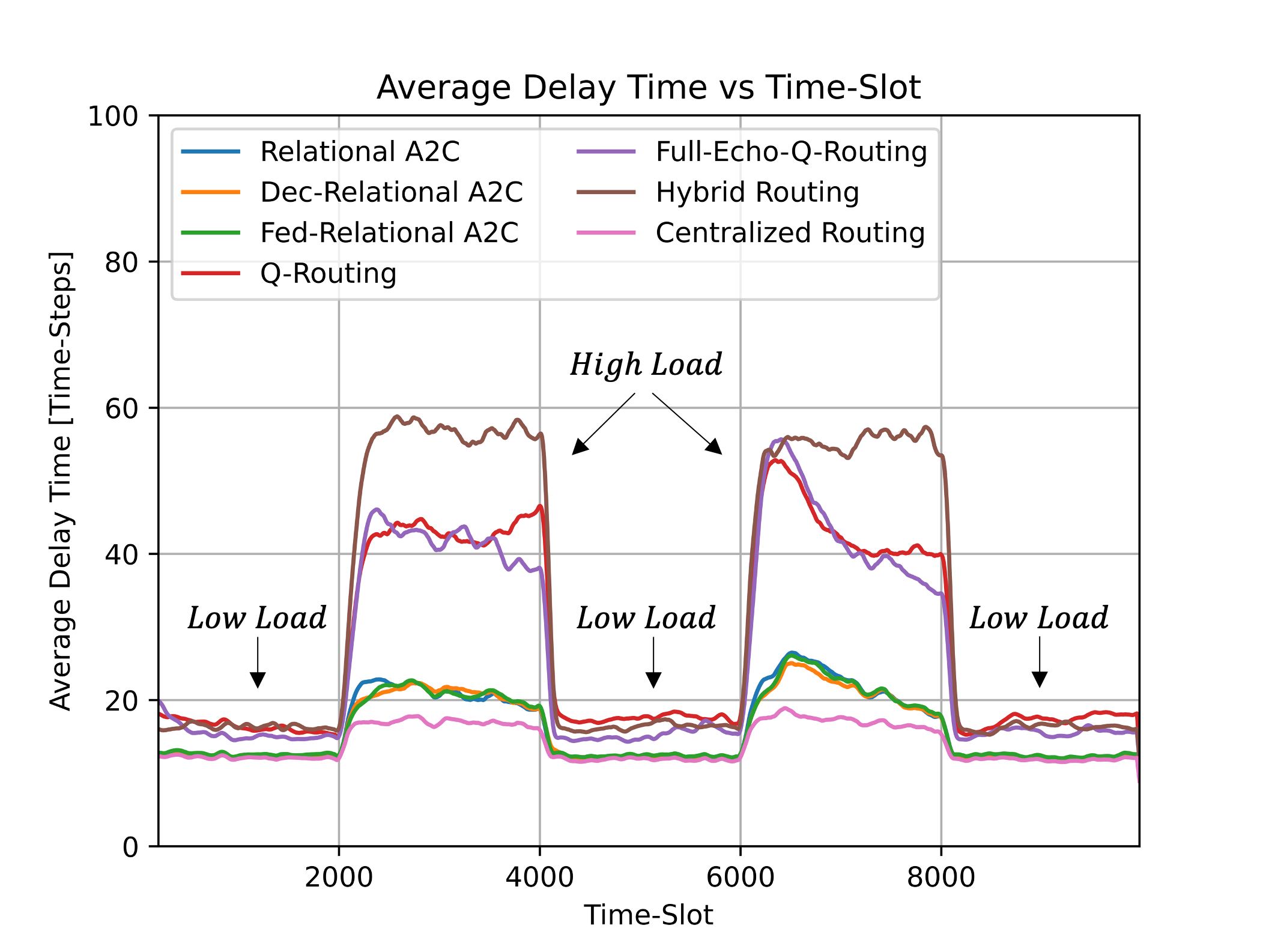}}\\
\caption{Static topology and dynamic topology - average delay performance for different routing algorithms under bursts of traffic. low load - $\lambda=2$, high load - $\lambda=5$.}
    \label{fig:load_burst_average_delay}
\end{figure*}
\par
In both static and dynamic scenarios, it is evident that the Relational A2C versions achieve superior performance than the traditional RL-algorithms.
Additionally, all algorithms exhibit similar reaction times when measuring the impact of changes in load from low to high and from high to low. When comparing between static and dynamic scenarios, we are able to see that the Relational A2C algorithms did not suffer from any performance degradation, while the tabular methods greatly suffer. 

Further, we observe that Full Echo Q-Routing is superior to traditional Q-Routing in static scenarios, while they achieve similar performance in dynamic scenarios. Additionally, Hybrid Routing suffers from the greatest degradation of all Q-Routing algorithms when working with high loads, emphasizing its inability to cope with such a non-stationary scenario.
Our conclusions are consistent with the results of our previous experiments, which further verifies the results.

\subsubsection{Experiment Results for Node Failure}
In this experiment, we evaluate how the algorithms respond to a scenario of node failure and recovery. To achieve this, the algorithm performance was evaluated while removing a random base station from the network for a specified period of time. Since we insert dynamic changes through node failures, we now consider only static topologies, in which users cannot move. The following results are based on an average of 30 consecutive experiments with random base-station failure in each experiment.
\par
This experiment was conducted for a high load as shown in Fig. \ref{fig:node_failure_performance_lambda_5}. To illustrate the arrival ratio metric in such an online setting, we measured the number of packets that dropped within our sliding window. From these results, it is evident that all algorithms are able to to recover from the node failure situation, although we observe that the average delay does not return to its initial value after the node has recovered. In our opinion, this is due to the fact that the other base stations' queues are already congested at this point. This results in a more congested network than before the node failure. In conjunction with the fact that the arrival rate process does not change over time, performance is degraded under higher loads.
\begin{figure*}[ht!]
  \subfloat[Average Delay Through Time. \label{fig:load_burst_average_delay_static_topology_lambda_5}]{%
      \includegraphics[ width=0.49\textwidth]{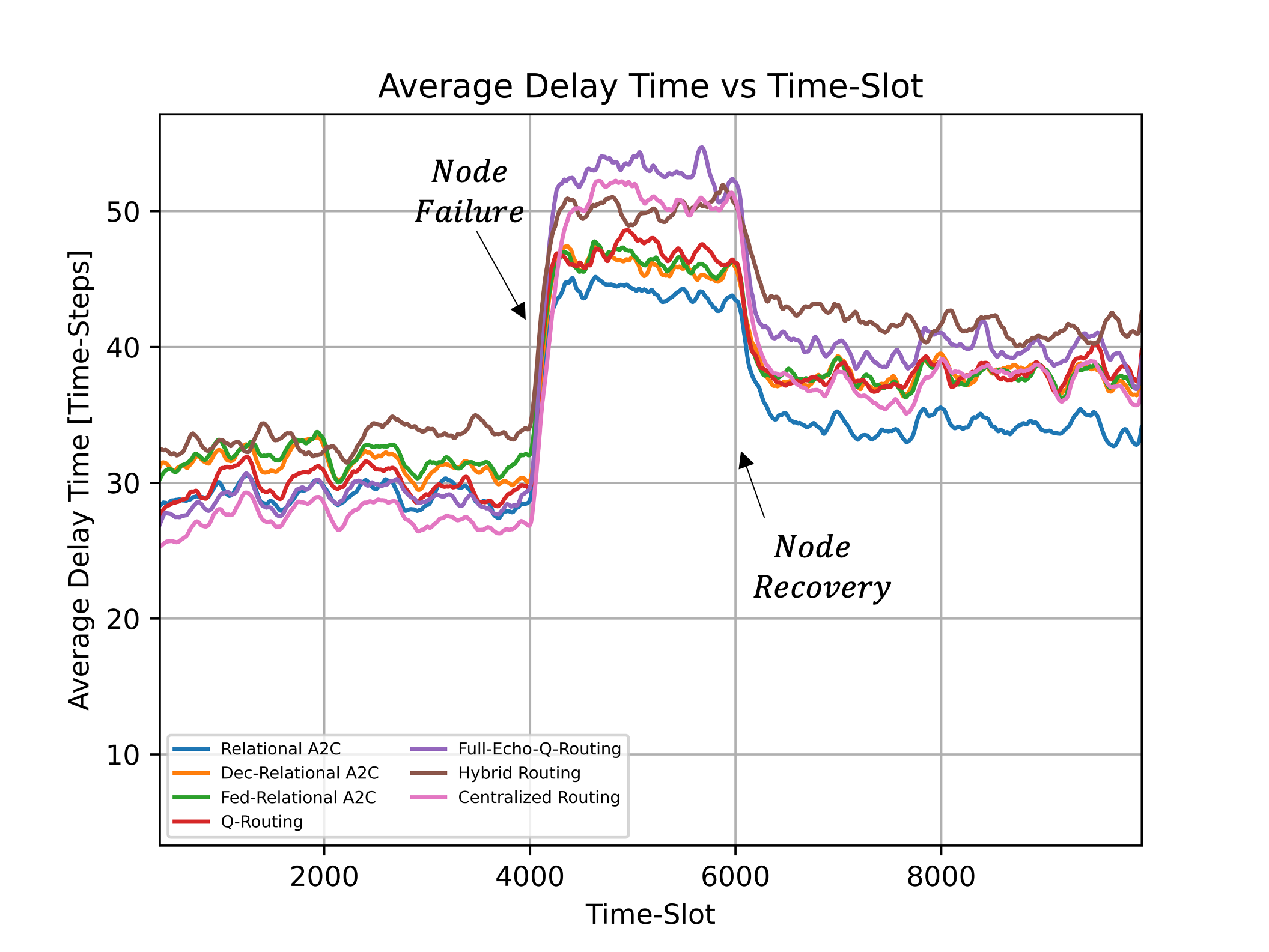}}
\hspace{\fill}
  \subfloat[Dropped Packets Through Time. \label{fig:load_burst_dropped_packets_static_lambda_5}]{%
      \includegraphics[ width=0.49\textwidth]{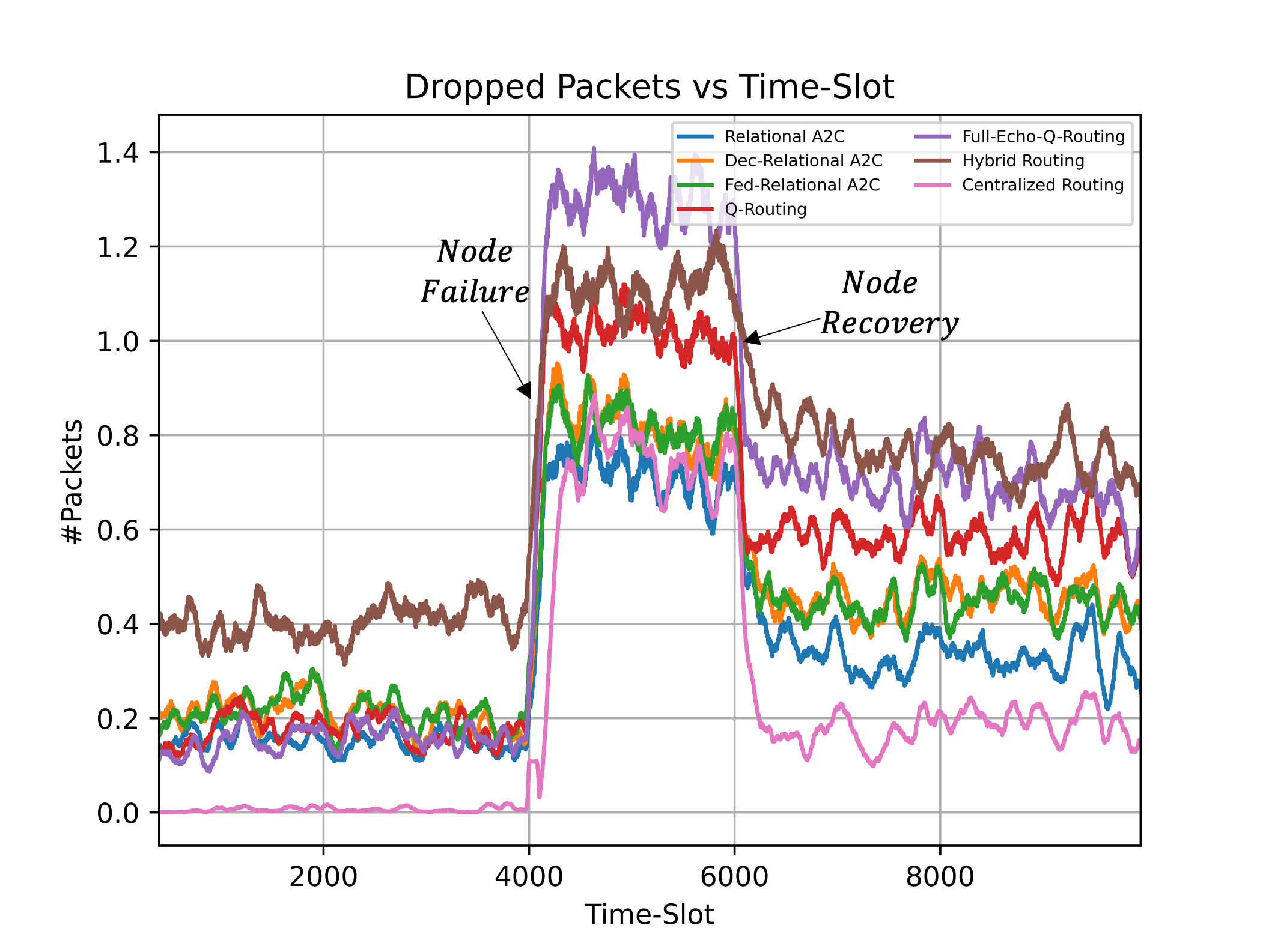}}\\
\caption{Performance for different routing algorithms in case of node failure for high load $\lambda=5$.}
    \label{fig:node_failure_performance_lambda_5}
\end{figure*}
\par
A further, but equally significant, conclusion we can draw from these measurements is that while Relational A2C has a lower delay than Centralized Routing in Fig. \ref{fig:load_burst_average_delay_static_topology_lambda_5}, its packet loss rate is higher in Fig. \ref{fig:load_burst_dropped_packets_static_lambda_5}. This demonstrates the importance of using both arrival rate and average delay in our analysis. 
\subsection{Algorithm Convergence}
In this section, we examine the algorithm's convergence time under a constant load, starting with a random initialization point. According to our results, all versions of our algorithm were able to converge to a similar solution, as opposed to the other baselines that suffer from performance degradation. Thus, we conclude that all Relational A2C-based algorithms were able to achieve a better routing solution than traditional algorithms. The results presented in Fig. \ref{fig:convergenceDelayMediumLoad} are the average of those measurements.
\begin{figure}
    \centering
    \includegraphics[width=0.49\textwidth]{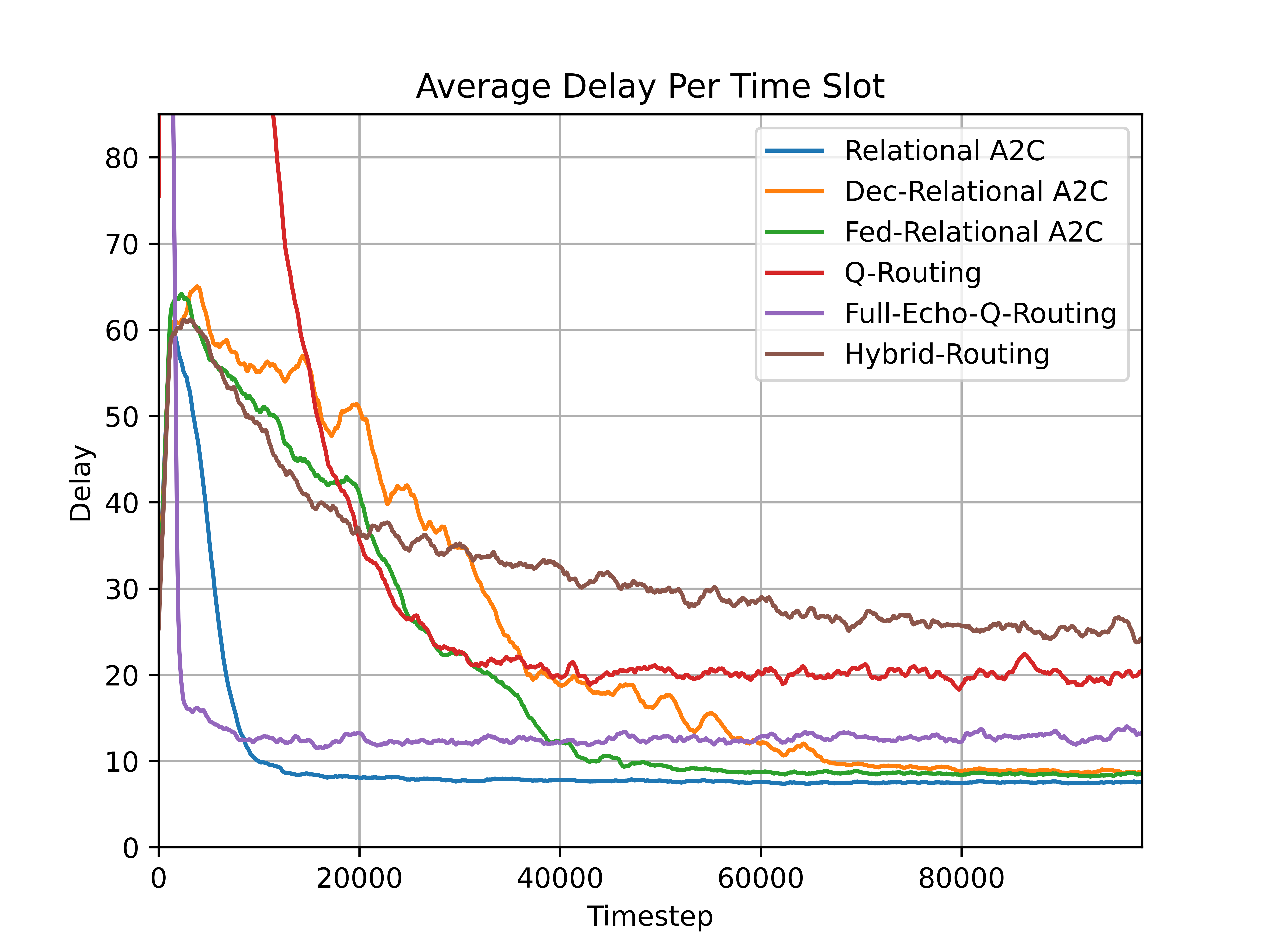}
    \caption{Illustration of the performance of different routing algorithms through their training procedures.}
    \label{fig:convergenceDelayMediumLoad}
\end{figure}
\par
Additionally, we found that centralized training achieved the fastest convergence among the different training paradigms of our proposed method. In fact, this convergence gap arises because the different agents can share their experience implicitly through the mutual updates of both the actor and the critic among all base stations. Furthermore, we conclude that Fed-Relational A2C performs better than Dec-Relational A2C in terms of convergence. In general, this difference can be understood intuitively, since federated learning approaches manage the trade-off between a fully decentralized training paradigm and a centralized training paradigm.
Another interesting phenomenon is that Full Echo Q-Routing appears to be the fastest algorithm that converges to a stable solution (an approximated point of equilibrium). This rapid convergence can be explained by the fact that Full Echo Q-Routing receives all of the rewards available to it, regardless of the chosen action. As a result, it is able to reduce the number of interactions required for convergence with the environment. 
\section{Conclusion}
In this paper, we examined the problem of routing in an IAB network in which multiple IABs nodes operate simultaneously within the same network in order to route multiple packets towards their destinations. A successful joint routing policy must be developed by the IABs in order to route packets effectively without causing network congestion. Due to physical limitations, the IAB is only able to exchange limited information with its neighbors, and thus does not know the current status of the entire network. This raises the interesting and unanswered question of which hops should be selected at each time step so that the network arrival ratio achieved by matching routing policies is maximal and the average packet delay is minimal. 
\par
To identify the joint routing policy that maximizes the network's arrival ratio while minimizing the average packet delay, we developed a novel Relational A2C  algorithm, which aims to determine the best joint routing strategy, based on observations collected by the IABs via online learning. To support decentralized training, we developed two different approaches which achieve similar performance to the centralized training approach. For various different scenarios, we compared the arrival ratio and average delay of the proposed Relational A2C  algorithms with those of six other algorithms. It was found that Relational A2C  performed better than baseline algorithms in all cases and achieved performance similar to that of a centralized approach.
Further, we demonstrate that network routing is crucial to reducing congestion and maximizing the utilization of network resources in IAB networks. These results clearly demonstrate the ability of Relational A2C based algorithms to learn near-centralized policies and the overall superiority of the proposed approach over existing methods.

\bibliographystyle{bib_style_ad_hoc_networks}
\bibliography{main}

\appendix
\newpage
\section{MDP Formulation}\label{appendix:MultiAgentMDPFormulation}
Under the assumption that each packet is an individual agent, this section introduces the proof that for the multiple packets scenario, we can formulate the problem of IAB network routing as an MDP with multiple agents. Table \ref{tab:multi_agent_mdp_table}. summarizes the Multi-Agent MDP formulation, while Lemmas \ref{lemma:mdp_form_2_reward}, \ref{lemma:mdp_form_2_transition} prove that it is indeed a Multi-Agent MDP. To formulate the problem as a Multi-Agent MDP we shall define the following tuple:  $<\mathcal{S},\mathcal{A} ,R,\mathcal{P},\gamma, N>$, where N represents the maximal number of packets (agents) our system supports.
\begin{table}[!h]
    \centering
    \begin{tabular}{|c|c|}
     \hline
         Type & Representation  \\
         \hline
         Action Space of agent i at time t & $\mathcal{A}^t_{i}=\bigg((\textbf{s}_{t,i,3}=Decision)\rightarrow\bigg\{a|\forall a\in\mathcal{N}, (n_{t,i},a)\in\mathcal{L}\bigg\}\bigg) \wedge \emptyset$ \\
         \hline
         Action Space at time t & $\mathcal{A}_t=\times_{i=0}^{N-1}\mathcal{A}_{t,i} $ \\
         \hline
         State & 
         $\textbf{s}_t=\bigcup_{i=0}^{N-1}\textbf{s}_{t,i}$,
         \\ 
         & where $\textbf{s}_{t,i}=\{n_{t,i}, ttl_{t,i}, ElapsedTime_{t,i},AgentMode_{t,i}\}$\\
         \hline
         Action & $\textbf{a}_t\in\mathcal{A}^t$ \\
         \hline
         Reward & $-\sum_{i=0}^{N-1}\bigg(ElapsedTime_{t,i}+d((n_{t,i},a_{t,i}))\bigg)\cdot\mathbbm{1}_{[s_{t,i,3} = 1]}$\\
         \hline
    \end{tabular}
    \caption{MDP formulation}
    \label{tab:multi_agent_mdp_table}
\end{table}
\begin{figure}[!h]
    \centering
    \includegraphics[width=\textwidth]{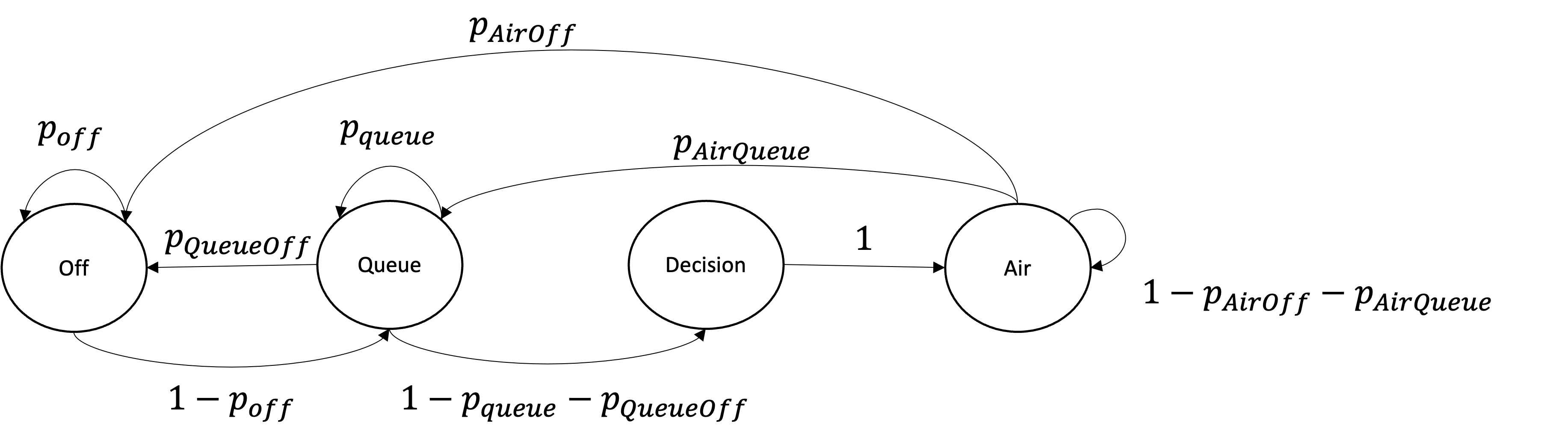}
    \caption{Different modes of each agent.}
    \label{fig:packet_mode}
\end{figure}
\par
Under this formulation, as illustrated in Fig.\ref{fig:packet_mode}, each agent has four available modes. The following explains the meaning of each mode, 
\begin{itemize}
    \item $Off$ - In this mode, the packet is not a part of the network at the moment, either because it has already been delivered to its destination, or suffered from a TTL expiration event, or because it is awaiting injection into the network.
    \item $Queue$ - In this mode, the packet is waiting in a specific base-station queue.
    \item $Decision$ - In this mode, the packet is currently required to make a next-hop decision.
    \item $Air$ - In this mode, the packet is transferred over the air from one node to another in the network.
\end{itemize}
We define our reward as follows:
\begin{equation}
    r_t(\textbf{s}_t,\textbf{a}_t) = -\sum_{i=0}^{N-1}\big(ElapsedTime_{t,i} + d((n_{t,i},a_{t,i}))\big)\cdot\mathbbm{1}_{[s_{t,i,3} = Decision]},
\end{equation}
specifically, $ElapsedTime_{t,i}$ represents the period of agent $i$ waiting at node $n_{t,i}$ queue before transmission and $d((n_{t,i},a_{t,i}))$ represents the transmission delay between node $n_{t,i}$ and node $a_{t,i}$ at time step $t\in\mathbb{N}$. The last term, $\mathbbm{1}_{[s_{t,i,3} == Decision]}$, represents if agent $i$ has permission to conduct a wireless hop or not. Next, we provide the full proof that this formulation is indeed a Multi-Agent MDP.
\begin{lemma}\label{lemma:mdp_form_2_reward}
 Given the history $\mathcal{H}_t$ the reward distribution depends only on $\textbf{s}_t$ and $\textbf{a}_t$. Thus, let $\Pr:\mathcal{S}\times\mathcal{A}\times\mathbb{Z}\rightarrow[0,1]$ be the agent's reward distribution.
\end{lemma}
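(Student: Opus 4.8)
The plan is to show that the reward, as defined in the Multi-Agent MDP formulation, is a measurable function of the pair $(\textbf{s}_t,\textbf{a}_t)$ alone, so that conditioning on the full history $\mathcal{H}_t$ collapses to conditioning on $(\textbf{s}_t,\textbf{a}_t)$. The reward is given explicitly by
\[
r_t(\textbf{s}_t,\textbf{a}_t) = -\sum_{i=0}^{N-1}\big(ElapsedTime_{t,i} + d((n_{t,i},a_{t,i}))\big)\cdot\mathbbm{1}_{[s_{t,i,3} = Decision]},
\]
so the first step is to verify that every quantity on the right-hand side is a coordinate of $\textbf{s}_t$ or of $\textbf{a}_t$, from which the claimed Markov property follows immediately.

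First I would decompose the reward term by term and match each factor to a component of the current state-action pair. By the state definition $\textbf{s}_{t,i}=\{n_{t,i}, ttl_{t,i}, ElapsedTime_{t,i}, AgentMode_{t,i}\}$, the quantities $ElapsedTime_{t,i}$, $n_{t,i}$, and $s_{t,i,3}=AgentMode_{t,i}$ are coordinates of $\textbf{s}_t=\bigcup_{i=0}^{N-1}\textbf{s}_{t,i}$; the next-hop $a_{t,i}$ is a coordinate of $\textbf{a}_t\in\times_{i=0}^{N-1}\mathcal{A}_{t,i}$; and the indicator $\mathbbm{1}_{[s_{t,i,3}=Decision]}$ is determined by $AgentMode_{t,i}$, again a state coordinate. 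Thus every factor except the link delay is literally read off from $(\textbf{s}_t,\textbf{a}_t)$.

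Next I would address the link-delay term $d((n_{t,i},a_{t,i}))$, the only ingredient not stored directly in the state vector, and I expect this to be the main obstacle. The key observation is that $d:\mathcal{L}\rightarrow\mathbb{N}$ is the delay mapping assigned to the graph's edges, so its value is pinned down once the endpoints $n_{t,i}$ (a state coordinate) and $a_{t,i}$ (an action coordinate) are specified, contributing no dependence on the history beyond the identity of the current link. To make this airtight in the presence of slowly varying edge delays, I would either treat $d$ as fixed within each time slot, so that its instantaneous value is a deterministic attribute of the current link and hence a function of $(\textbf{s}_t,\textbf{a}_t)$, or, if one models $d$ as random, argue that the conditional law of $d((n_{t,i},a_{t,i}))$ given $\mathcal{H}_t$ depends on the history only through the link $(n_{t,i},a_{t,i})$, which is itself a function of $(\textbf{s}_t,\textbf{a}_t)$.

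Having established that $r_t$ is a measurable function of $(\textbf{s}_t,\textbf{a}_t)$, I would conclude by writing, for any $r\in\mathbb{Z}$,
\[
\Pr(R_t=r\mid\mathcal{H}_t)=\Pr(R_t=r\mid\textbf{s}_t,\textbf{a}_t)=\mathbbm{1}_{[r=r_t(\textbf{s}_t,\textbf{a}_t)]},
\]
a degenerate distribution concentrated at the deterministic value $r_t(\textbf{s}_t,\textbf{a}_t)$. Since delays lie in $\mathbb{N}$ and $ElapsedTime_{t,i}\in\mathbb{N}$, the reward is integer-valued, so this defines the reward distribution $\Pr:\mathcal{S}\times\mathcal{A}\times\mathbb{Z}\rightarrow[0,1]$ asserted in the statement, which completes the proof of the Markov property for the reward.
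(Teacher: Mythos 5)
Your proposal is correct and follows essentially the same route as the paper: both decompose the reward into $ElapsedTime_{t,i}$ (a state coordinate) and $d((n_{t,i},a_{t,i}))$ (determined by, or conditionally distributed given, the current state--action pair), so that conditioning on $\mathcal{H}_t$ collapses to conditioning on $(\textbf{s}_t,\textbf{a}_t)$. The only minor divergence is that the paper treats $d((n_{t,i},a_{t,i}))$ as a random variable whose law depends only on $(\textbf{s}_t,\textbf{a}_t)$ rather than committing to the degenerate point-mass form in your final display, but you already cover that case in your discussion of the delay term.
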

\begin{proof}
    \begin{equation}
        \Pr(r_t|\mathcal{H}_t)\stackrel{\scriptstyle{(a)}}{=}\Pr(r_t|\textbf{s}^t,\textbf{a}^t,r^{t-1})\stackrel{\scriptstyle{(b)}}{=}\Pr(r_t|\textbf{s}_t,\textbf{a}_t).
    \end{equation}
    $(a)$ is true straight from the definition of history. Further, it follows that $(b)$ is true since the reward does not depend on the previous history once the action and state are determined. This claim is true since it is composed of $ElapsedTime_{t,i}$, which is a part of the current state, and $d((n_{t,i},a_{t,i}))$, which is a random variable that is dependent only on the current state and action.
\end{proof}
\begin{lemma}\label{lemma:mdp_form_2_transition}
     Given the history $\mathcal{H}_t$ the transition probability matrix $\Pr$ depends only on $\textbf{s}_t$ and $\textbf{a}_t$. Thus, let $\Pr:\mathcal{S}\times\mathcal{A}\times\mathcal{S}\rightarrow[0,1]$ be the transition probability matrix.
\end{lemma}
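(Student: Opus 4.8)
The plan is to follow the same two-step argument used in the proof of Lemma~\ref{lemma:mdp_form_2_reward}, replacing the reward $r_t$ with the successor state $\textbf{s}_{t+1}$. First I would invoke the definition of the history to obtain
\[
\Pr(\textbf{s}_{t+1}|\mathcal{H}_t)\overset{(a)}{=}\Pr(\textbf{s}_{t+1}|\textbf{s}^t,\textbf{a}^t,r^{t-1})\overset{(b)}{=}\Pr(\textbf{s}_{t+1}|\textbf{s}_t,\textbf{a}_t),
\]
where $(a)$ is immediate from $\mathcal{H}_t=(\textbf{s}^t,\textbf{a}^t,r^{t-1})$ and $(b)$ is the substance of the lemma: the strict past $(\textbf{s}^{t-1},\textbf{a}^{t-1},r^{t-1})$ is irrelevant once $(\textbf{s}_t,\textbf{a}_t)$ is fixed.

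To establish $(b)$ I would decompose the joint state into its per-agent factors, $\textbf{s}_{t+1}=\bigcup_{i=0}^{N-1}\textbf{s}_{t+1,i}$ with $\textbf{s}_{t+1,i}=\{n_{t+1,i},ttl_{t+1,i},ElapsedTime_{t+1,i},AgentMode_{t+1,i}\}$, and verify coordinate by coordinate that each update rule is a function of the current joint state $\textbf{s}_t$ and action $\textbf{a}_t$ only. The TTL decrements deterministically and so depends only on $ttl_{t,i}$; the node $n_{t+1,i}$ is unchanged while the packet sits in its queue or travels over the air, and is otherwise set to the next hop encoded in $a_{t,i}$ when $AgentMode_{t,i}=Decision$, so it is a function of $(n_{t,i},AgentMode_{t,i},a_{t,i})$; and the elapsed-time counter either increments or resets according to whether the agent remains in its mode or transitions, which is again fixed by $(\textbf{s}_t,\textbf{a}_t)$.

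The crux is the mode coordinate, whose evolution follows the finite automaton of Fig.~\ref{fig:packet_mode}. I would check each admissible transition separately: $Queue\!\to\!Decision$ is determined by the TTL-prioritized queue ordering, which is fully encoded in $\textbf{s}_t$; $Decision\!\to\!Air$ is triggered by $\textbf{a}_t$; $Air\!\to\!Queue$ fires once the accumulated transit time reaches the link delay $d((n,m))$, a quantity that, exactly as in the reward proof, depends only on the current state and action; and any transition into $Off$ is caused by TTL expiry ($ttl_{t,i}=0$) or by arrival at the destination, both read off from $\textbf{s}_t$. The single genuinely stochastic transition is the injection $Off\!\to\!Queue$, governed by the packet-arrival process, and here I would use that this process is Poisson and hence memoryless, so the injection probability at step $t$ is independent of $(\textbf{s}^{t-1},\textbf{a}^{t-1},r^{t-1})$ given that the agent currently occupies the $Off$ mode.

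I expect the main obstacle to be precisely this last point: arguing rigorously that the only exogenous randomness not already determined by $(\textbf{s}_t,\textbf{a}_t)$ — the arrival process, together with the link-delay and mobility realizations that reshape the topology — is memoryless and therefore contributes nothing beyond the present state. Once that independence is in place, combining it with the deterministic bookkeeping of the remaining coordinates yields the conditional independence of $\textbf{s}_{t+1}$ from the strict past, which is exactly $(b)$ and completes the proof.
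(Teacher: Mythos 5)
Your proposal is correct and follows essentially the same route as the paper's proof: a case analysis over the per-agent mode transitions (deterministic bookkeeping for TTL, node, elapsed time, and the queue ordering encoded in the state), the i.i.d.\ injection process handling the only exogenous randomness, and the final two-step conditioning chain from $\mathcal{H}_t$ down to $(\textbf{s}_t,\textbf{a}_t)$. The only difference is that you explicitly flag link-delay drift and user mobility as additional exogenous randomness requiring a memorylessness assumption, a point the paper's proof silently omits.
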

\begin{proof}
We begin by defining the following probabilities,
Whenever an agent reaches a TTL expiration event, its mode will transition to Off:
\begin{equation}
    \Pr\big(\textbf{s}_{t+1,i}=(\cdot,\cdot,\cdot,Off)|\textbf{s}_t,\textbf{a}_t, s_{t,i,1}=0\big) = 1.
\end{equation}
Next, we deal with the Air mode transition probabilities:
\begin{equation}
    \Pr\big(\textbf{s}_{t+1,i}=\big(a_t,ttl_{t,i}-1,d((n_{t,i},a_{t,i})),Air\big)|\textbf{s}_t,\textbf{a}_t,s_{t,i,3} = Decision\big) = 1.
\end{equation}
\begin{equation}
    \Pr\big(\textbf{s}_{t+1,i}=\big(n_{t,i},ttl_{t,i}-1,ElapsedTime_{t,i}-1,Air\big)|\textbf{s}_t,\textbf{a}_t,s_{t,i,3} = Air,s_{t,i,2} > 0\big) = 1.
\end{equation}
\begin{equation}
    \Pr\big(\textbf{s}_{t+1,i}=(n_{t,i},ttl_{t,i}-1,1,Queue)|\textbf{s}_t,\textbf{a}_t,s_{t,i,3} = Air,s_{t,i,2} = 0\big) = 1.
\end{equation}
Next, we explain why the current state combined with agent action provides sufficient information regarding queue transition probabilities, so the history condition does not matter. In the current state, the agent knows both the TTL values and the locations of all the agents. A prioritized queue, in combination with its state and action, enables the agent to determine which agents will transition from queue to decision mode next.
\par 
Last, the injection process of new agents in our system, which is based over sampling an i.i.d. random process, is independent at the agent history.
Thus, we can define our transition probabilities as follows:
\begin{align*}
    \Pr(\textbf{s}_{t+1}|\mathcal{H}_t) \stackrel{\scriptstyle{(a)}}{=}\Pr(\textbf{s}_{t+1}|\textbf{s}^t,\textbf{a}^t,r^{t-1})
    \stackrel{\scriptstyle{(b)}}{=}& 
    \Pr(\textbf{s}_{t+1}|\textbf{s}_t,\textbf{a}_t).
\end{align*}
Equality $(a)$ is from the definition of history. Equality $(b)$ is due to the fact that given that you are at state $\textbf{s}_t$ and you have chose action $\textbf{a}_t$, the previous states that you have visited have no influence over your next transition probabilities, as we have just shown for each agent mode.
\end{proof}

\section{Proof of Lemma \ref{lemma:gradient_relational}}\label{appendix:lemma1}
\begin{proof}
    Let $\tau(T) \triangleq (\textbf{s}_0, \textbf{o}_{0},\textbf{a}_0,r_{1},\ldots,\textbf{s}_{T-1},\textbf{o}_{T-1},\textbf{a}_{n},r_{T},\textbf{s}_{T},\textbf{o}_{T})$ represent a possible trajectory with length $T$. In addition, let $\Pr(\tau(T))\triangleq\Pr(\textbf{s}_0, \textbf{o}_{0},\textbf{a}_0,r_{1},\ldots,\textbf{s}_{T-1},\textbf{o}_{T-1},\textbf{a}_{T-1},r_{T},\textbf{s}_{T},\textbf{o}_{T})$,
by applying the chain rule and Lemmas \ref{lemma:mdp_form_2_reward}, \ref{lemma:mdp_form_2_transition}, we receive
$\Pr(\tau(T))=\Pr(\textbf{s}_{0})\Pi_{t=0}^{T-1}\big(\Pi_{n=0}^{N-1}\Pi(a_{t,n}|\textbf{o}_{t,n};\boldsymbol{\theta})\big)\Pr(\textbf{o}_{t}|\textbf{s}_{t})\Pr(\textbf{s}_{t+1},r_{t+1}|\textbf{s}_{t},\textbf{a}_{t})$.
Thus, we get
\begin{equation} \label{eq:prob_derivative}
    \nabla_{\boldsymbol{\theta}}\log\Pr(\tau(T))=\sum_{t=0}^{T-1}\sum_{n=0}^{N-1}\nabla_{\boldsymbol{\theta}}\log\Pi(a_{t,n}|\textbf{o}_{t,n};\boldsymbol{\theta}).
\end{equation}
Let us recall our goal, which is that we would like to solve the following optimization problem:
\[\Pi^\star = \text{arg}\max_{\Pi} J(\Pi) = \text{arg}\max_{\Pi}\underbrace{\mathbb{E}[\sum_{t=0}^{T-1}\gamma^{t}\cdot r_{t+1}|\Pi]}_{J(\Pi)}.\]
Following that, we can now calculate the gradient of the objective $J(\Pi)$ w.r.t. the policy parameters $\boldsymbol{\theta}$:

\begin{align*}
    \nabla_\theta J(\theta) & = \nabla_\theta\mathbb{E}\bigg[\frac{1}{N}\sum_{n=0}^{N-1}G_n\big]\stackrel{\scriptstyle{(a)}}{=} \sum_{\tau(T)}\nabla_{\boldsymbol{\theta}}\Pr(\tau(T))\cdot \frac{1}{N}\sum_{t=0}^{T-1}\gamma^{t}\cdot r_{t+1} \\
    &
    = \sum_{\tau(T)}\nabla_{\boldsymbol{\theta}}\Pr(\tau(T))\cdot\frac{\Pr(\tau(T))}{\Pr(\tau(T))}\cdot\sum_{t=0}^{T-1}\gamma^{t}\cdot r_{t+1} \\ 
    & = 
    \sum_{\tau(T)}\Pr(\tau(T))\nabla_{\boldsymbol{\theta}}\log\Pr(\tau(T))\cdot \sum_{t=0}^{T-1}\gamma^{t}\cdot r_{t+1} 
    \\ 
    &
    \stackrel{\scriptstyle{(b)}}{=}
    \sum_{\tau(T)}\Pr(\tau(T))\bigg(\sum_{t=0}^{T-1}\sum_{n'=0}^{N-1}\nabla_{\boldsymbol{\theta}}\log\Pi(a_{t,n'}|\textbf{o}_{t,n'};\boldsymbol{\theta})\bigg)\cdot\sum_{t'=0}^{T-1}\gamma^{t'}\cdot r_{t'+1} 
    \\
    &
    \stackrel{\scriptstyle{(c)}}{=}
    \mathbb{E}\bigg[\sum_{t=0}^{T-1}\bigg(\sum_{n'=0}^{N-1}\nabla_{\boldsymbol{\theta}}\log\Pi(a_{t,n'}|\textbf{o}_{t,n'};\boldsymbol{\theta})\bigg)\cdot \bigg(\sum_{t'=0}^{T-1}\gamma^{t'}\cdot r_{t'+1}\bigg)\bigg] 
    \\
    &
    \stackrel{\scriptstyle{(d)}}{=}
    \mathbb{E}\bigg[\sum_{t=0}^{T-1}\bigg(\sum_{n'\in\mathcal{I}_t}\nabla_{\boldsymbol{\theta}}\log\Pi(a_{t,n'}|\textbf{o}_{t,n'};\boldsymbol{\theta})\bigg)\cdot \bigg(\sum_{t'=t}^{T-1}\gamma^{t'}\cdot r_{t'+1}\bigg)\bigg] 
     \\
     &
     \stackrel{\scriptstyle{(e)}}{\propto}\mathbb{E}\bigg[\bigg(\sum_{n'\in\mathcal{I}_{t}}\nabla_{\boldsymbol{\theta}}\log\Pi(a_{t,n'}|\textbf{o}_{t,n'};\boldsymbol{\theta})\bigg)\cdot \bigg(\sum_{t'=t}^{T-1}\gamma^{t'}\cdot r_{t'+1}\bigg)\bigg]
    \\ 
    &
     \stackrel{\scriptstyle{(f)}}{=} \mathbb{E}_{\textbf{s}_t\sim\mu(s),\textbf{o}_t\sim\Pr(\cdot|\textbf{s}_t),\textbf{a}_t\sim\Pi(\cdot|\textbf{o}_t)}\bigg[\mathbb{E}\bigg[\bigg(\sum_{n'\in\mathcal{I}_{t}}\nabla_{\boldsymbol{\theta}}\log\Pi(a_{t,n'}|\textbf{o}_{t,n'};\boldsymbol{\theta})\bigg)\cdot \sum_{t'=t}^{T-1}\gamma^{t'}\cdot r_{t'+1})\bigg|\textbf{s}_t,\textbf{o}_t,\textbf{a}_t\bigg]\bigg]
     \\ 
    & 
    \stackrel{\scriptstyle{(g)}}{=}\mathbb{E}_{\textbf{s}_t\sim\mu(s),\textbf{o}_t\sim\Pr(\cdot|\textbf{s}_t),\textbf{a}_t\sim\Pi(\cdot|\textbf{o}_t)}\bigg[\sum_{n'\in\mathcal{I}_{t}}\nabla_{\boldsymbol{\theta}}\log\Pi(a_{t,n'}|\textbf{o}_{t,n'};\boldsymbol{\theta})\underbrace{\mathbb{E}\bigg[\sum_{t'=t}^{T-1}\gamma^{t'}\cdot r_{t'+1}\bigg|\textbf{s}_t,\textbf{a}_t\bigg]}_{Q_\Pi(\textbf{s}_t,\textbf{a}_t)}\bigg]
     \\ 
    &
    \stackrel{\scriptstyle{(h)}}{=} \mathbb{E}_{\textbf{s}_t\sim\mu(s),\textbf{o}_t\sim\Pr(\cdot|\textbf{s}_t),\textbf{a}_t\sim\Pi(\cdot|\textbf{o}_t)}\bigg[\sum_{n'\in\mathcal{I}_{t}}\nabla_{\boldsymbol{\theta}}\log\Pi(a_{t,n'}|\textbf{o}_{t,n'};\boldsymbol{\theta})\cdot Q_\Pi(\textbf{s}_t,\textbf{a}_t)\bigg]
     \\ 
    &
      \stackrel{\scriptstyle{(i)}}{=}\mathbb{E}_{\textbf{s}_t\sim\mu(s),\textbf{o}_t\sim\Pr(\cdot|\textbf{s}_t),\textbf{a}_t\sim\Pi(\cdot|\textbf{o}_t)}\bigg[\bigg(\sum_{n'\in\mathcal{I}_{t}}\nabla_{\boldsymbol{\theta}}\log\Pi(a_{t,n'}|\textbf{o}_{t,n'};\boldsymbol{\theta})\cdot \big(Q_\Pi(\textbf{s}_t, \textbf{a}_t)-\underbrace{\frac{1}{|\mathcal{I}_t|}\sum_{n\in\mathcal{I}_{t}}\hat{V}_{\Pi}(\textbf{o}_{t,n};\textbf{w})}_{b(\textbf{o}_t)}\big)\bigg)\bigg].
    \end{align*}

Equality $(a)$ follows from the definition of expectation combined with the linearity of both expectation and derivative (under the assumption that the trajectories set is independent of the policy parameters). Equality $(b)$ follows from Eq. \ref{eq:prob_derivative}, and equality $(c)$ follows from the definition of expectation. Next, equality $(d)$ holds since the policy of agents that are not active at a certain time slot $t\in(0,\ldots,T-1)$ is independent of the policy parameters, and therefore, the corresponding derivative is equal to zero. The proportion relation $(e)$ holds under the assumption of an MDP with a stationary distribution, which we denote with $\mu(s), s\in\mathcal{S}$. 
Equality $(f)$ holds according to the law of total expectation. 
Equality $(g)$ holds since the derivative of the agent's policy is deterministic when considering the current state, observation and action, and therefore, it can be moved out of the inner expectation. Equality $(h)$ follows from the definition of action-value function and from the fact that given the current state and action, the current observation is irrelevant to the expected return. Last, equality $(i)$ is due to the fact that reducing a baseline function that is independent of the policy actions does not introduce any bias under the expectation \cite[Ch.~13]{sutton2018reinforcement}.
\end{proof}
\section{Proof of Lemma \ref{lemma:dec_gradient_approx}}\label{appendix:lemma2}
\begin{proof}
    Let $\tau(T) \triangleq (\textbf{s}_0, \textbf{o}_{0},\textbf{a}_0,r_{1},\ldots,\textbf{s}_{T-1},\textbf{o}_{T-1},\textbf{a}_{n},r_{T},\textbf{s}_{T},\textbf{o}_{T})$ represent a possible trajectory with length $T$. In addition, we denote the joint parameters $\boldsymbol{\theta}$ as follows: $\boldsymbol{\theta}\triangleq\bigcup_{k=0}^{K-1} \boldsymbol{\theta}_k$. Next, let $\Pr(\tau(T))\triangleq\Pr(\textbf{s}_0, \textbf{o}_{0},\textbf{a}_0,r_{1},\ldots,\textbf{s}_{T-1},\textbf{o}_{T-1},\textbf{a}_{T-1},r_{T},\textbf{s}_{T},\textbf{o}_{T})$,
by applying the chain rule and Lemmas \ref{lemma:mdp_form_2_reward}, \ref{lemma:mdp_form_2_transition}; thus we receive
$\Pr(\tau(T))=\Pr(\textbf{s}_{0})\Pi_{t=0}^{T-1}\big(\Pi_{n=0}^{N-1}\Pi(a_{t,n}|\textbf{o}_{t,n};\boldsymbol{\theta})\big)\Pr(\textbf{o}_{t}|\textbf{s}_{t})\Pr(\textbf{s}_{t+1},r_{t+1}|\textbf{s}_{t},\textbf{a}_{t}).$
Thus, we get
\begin{equation} \label{eq:prob_derivative_dec}
\nabla_{\boldsymbol{\theta}_k}\log\Pr(\tau(T))=\sum_{t=0}^{T-1}\sum_{n=0}^{N-1}\nabla_{\boldsymbol{\theta}_k}\log\Pi(a_{t,n}|\textbf{o}_{t,n};\boldsymbol{\theta}) \text{ for } \forall k\in(0,\ldots,K-1).    
\end{equation}
Next, we derive the gradient of the joint objective $J(\Pi)$ w.r.t. $\boldsymbol{\theta}_k$.
\begin{align*}
     \nabla_{\boldsymbol{\theta}_k} J(\theta) & = \nabla_\theta\mathbb{E}\bigg[\frac{1}{N}\sum_{n=0}^{N-1}G_n\big]\stackrel{\scriptstyle{(a)}}{=} \sum_{\tau(T)}\nabla_{\boldsymbol{\theta}_k}\Pr(\tau(T))\cdot \sum_{t=0}^{T-1}\gamma^{t}\cdot r_{t+1}  \\
    &
    = \sum_{\tau(T)}\nabla_{\boldsymbol{\theta}_k}\Pr(\tau(T))\cdot\frac{\Pr(\tau(T))}{\Pr(\tau(T))}\cdot\sum_{t=0}^{T-1}\gamma^{t}\cdot r_{t+1}  \\ 
    & = 
    \sum_{\tau(T)}\Pr(\tau(T))\nabla_{\boldsymbol{\theta}_k}\log\Pr(\tau(T))\cdot \sum_{t=0}^{T-1}\gamma^{t}\cdot r_{t+1}  
    \\ 
    &
    \stackrel{\scriptstyle{(b)}}{=} \sum_{\tau(T)}\Pr(\tau(T))\bigg(\sum_{t=0}^{T-1}\sum_{n'=0}^{N-1}\nabla_{\boldsymbol{\theta}_k}\log\Pi(a_{t,n'}|\textbf{o}_{t,n'};\boldsymbol{\theta})\bigg)\sum_{t'=0}^{T-1}\gamma^{t'}\cdot r_{t'+1} 
    \\
    &
    \stackrel{\scriptstyle{(c)}}{=}
    \mathbb{E}\bigg[\sum_{t=0}^{T-1}\bigg(\sum_{n'=0}^{N-1}\nabla_{\boldsymbol{\theta}_k}\log\Pi(a_{t,n'}|\textbf{o}_{t,n'};\boldsymbol{\theta})\bigg)\cdot \bigg(\sum_{t'=0}^{T-1}\gamma^{t'}\cdot r_{t'+1}\bigg)\bigg] 
    \\
    &
    \stackrel{\scriptstyle{(d)}}{=}
    \mathbb{E}\bigg[\sum_{t=0}^{T-1}\bigg(\sum_{n'\in\mathcal{I}_{k,t}}\nabla_{\boldsymbol{\theta}_k}\log\Pi(a_{t,n'}|\textbf{o}_{t,n'};\boldsymbol{\theta}_k)\bigg)\cdot \bigg(\sum_{t'=t}^{T-1}\gamma^{t'}\cdot r_{t'+1}\bigg)\bigg] 
     \\
    &
     \stackrel{\scriptstyle{(e)}}{\propto}\mathbb{E}\bigg[\bigg(\sum_{n'\in\mathcal{I}_{k,t}}\nabla_{\boldsymbol{\theta}_k}\log\Pi(a_{t,n'}|\textbf{o}_{t,n'};\boldsymbol{\theta}_k)\bigg)\cdot \bigg(\sum_{t'=t}^{T-1}\gamma^{t'}\cdot r_{t'+1}\bigg)\bigg]
    \\ 
    &
     \stackrel{\scriptstyle{(f)}}{=} \mathbb{E}_{\textbf{s}_t\sim\mu(s),\textbf{o}_t\sim\Pr(\cdot|\textbf{s}_t),\textbf{a}_t\sim\Pi(\cdot|\textbf{o}_t)}\bigg[\mathbb{E}\bigg[\bigg(\sum_{n'\in\mathcal{I}_{k,t}}\nabla_{\boldsymbol{\theta}_k}\log\Pi(a_{t,n'}|\textbf{o}_{t,n'};\boldsymbol{\theta}_k)\bigg)\cdot \sum_{t'=t}^{T-1}\gamma^{t'}\cdot r_{t'+1})\bigg|\textbf{s}_t,\textbf{o}_t,\textbf{a}_t\bigg]\bigg]
     \\ 
      & 
    \stackrel{\scriptstyle{(g)}}{=}\mathbb{E}_{\textbf{s}_t\sim\mu(s),\textbf{o}_t\sim\Pr(\cdot|\textbf{s}_t),\textbf{a}_t\sim\Pi(\cdot|\textbf{o}_t)}\bigg[\sum_{n'\in\mathcal{I}_{k,t}}\nabla_{\boldsymbol{\theta}_k}\log\Pi(a_{t,n'}|\textbf{o}_{t,n'};\boldsymbol{\theta}_k)\underbrace{\mathbb{E}\bigg[\sum_{t'=t}^{T-1}\gamma^{t'}\cdot r_{t'+1}\bigg|\textbf{s}_t,\textbf{a}_t\bigg]}_{Q_\Pi(\textbf{s}_t,\textbf{a}_t)}\bigg]
     \\ 
      &
    \stackrel{\scriptstyle{(h)}}{=} \mathbb{E}_{\textbf{s}_t\sim\mu(s),\textbf{o}_t\sim\Pr(\cdot|\textbf{s}_t),\textbf{a}_t\sim\Pi(\cdot|\textbf{o}_t)}\bigg[\sum_{n'\in\mathcal{I}_{k,t}}\nabla_{\boldsymbol{\theta}_k}\log\Pi(a_{t,n'}|\textbf{o}_{t,n'};\boldsymbol{\theta}_k)\cdot Q_\Pi(\textbf{s}_t,\textbf{a}_t)\bigg]
     \end{align*}
     \begin{align*}
     \\ 
    & \stackrel{\scriptstyle{(i)}}{=} \mathbb{E}\bigg[\bigg(\sum_{n'\in\mathcal{I}_{k,t}}\nabla_{\boldsymbol{\theta}_k}\log\Pi(a_{t,n'}|\textbf{o}_{t,n'};\boldsymbol{\theta}_k)\cdot \big(Q_\Pi(\textbf{s}_t, \textbf{a}_t)-\underbrace{\frac{1}{|\mathcal{I}_{k,t}|}\sum_{n\in\mathcal{I}_{k,t}}\hat{V}_{\Pi}(\textbf{o}_{t,n};\textbf{w})}_{b(\textbf{o}_t)}\big)\bigg)\bigg].
    \end{align*}

Equality $(a)$ follows from the definition of expectation combined with the linearity of both the expectation and derivative (Under the assumption that the trajectories set is independent of the policy parameters). Equality $(b)$ follows from Eq. \ref{eq:prob_derivative_dec}, and equality $(c)$ follows from the definition of expectation. Next, equality $(d)$ holds since the policy of agents that are active in a certain time slot $t\in(0,\ldots,T-1)$ is dependent on the policy parameters of the $k^{th}$ base-station only if they are acting within these base stations, and therefore, the remaining agent's derivative is equal to zero (the ones that are active outside of base-station $k$ or the ones that are not active at all). The proportion relation $(e)$ holds under the assumption of a MDP with a stationary distribution, which we denote with $\mu(s), s\in\mathcal{S}$. 
The equality $(f)$ holds according to the law of total expectation. 
Equality $(g)$ holds since the derivative of the agent's policy is deterministic when considering the current state, observation and action, and therefore, it can be moved out of the inner expectation. Equality $(h)$ follows from the definition of action-value function and from the fact that given the current state and action, the current observation is irrelevant to the expected return. Last, equality $(i)$ is due to the fact that reducing a baseline function that is independent with policy actions does not introduce any bias under the expectation \cite[Ch.~13]{sutton2018reinforcement}.
\end{proof}
\end{document}